\let\csname equation*\endcsname\relax
\let\csname endequation*\endcsname\relax
\def\softd{{\leavevmode\setbox1=\hbox{d}%
\hbox to 1.05\wd1{d\kern-0.4ex{\char039}\hss}}}
\def\softt{{\leavevmode\setbox1=\hbox{t}%
\hbox to \wd1{t\kern-0.6ex{\char039}\hss}}}
\newtheorem{theorem}{Theorem}[section]
\newtheorem{proposition}[theorem]{Proposition}
\newtheorem{remark}{Remark}[section]
\newtheorem{remarks}{Remarks}[section]
\begin{document}

\title[Spiral waveguides spectra]
{Spectral properties of spiral-shaped quantum waveguides}

\author{Pavel Exner$^{1,2}$ and Milo\v{s} Tater$^1$}
\address{1) Nuclear Physics Institute, Czech Academy of Sciences,
Hlavn\'{i} 130, \\ 25068 \v{R}e\v{z} near Prague, Czech Republic}
\address{2) Doppler Institute, Czech Technical University, B\v{r}ehov\'{a} 7, 11519 Prague, \\ Czech Republic}
\ead{exner@ujf.cas.cz, tater@ujf.cas.cz}

\begin{abstract}
We investigate properties of a particle confined to a hard-wall spiral-shaped region. As a case study we analyze in detail the Archimedean spiral for which the spectrum above the continuum threshold is absolutely continuous away from the thresholds. The subtle difference between the radial and perpendicular width implies, however, that in contrast to `less curved' waveguides, the discrete spectrum is empty in this case. We also discuss modifications such a multi-arm Archimedean spirals and spiral waveguides with a central cavity; in the latter case bound state already exist if the cavity exceeds a critical size. For more general spiral regions the spectral nature depends on whether they are `expanding' or `shrinking'. The most interesting situation occurs in the asymptotically Archimedean case where the existence of bound states depends on the direction from which the asymptotics is reached.
\end{abstract}

\pacs{03.65.Ge, 03.65Db}

\vspace{2pc} \noindent{\it Keywords}: quantum waveguides, spiral-shaped regions, Dirichlet boundary

%
\submitto{\JPA}

\section{Introduction}
\setcounter{equation}{0}

Investigation of the dynamics of quantum particles confined to tubular regions gave rise to interesting results among which the probably most unexpected one concerned localized states that owed their existence exclusively to the geometry of the confinement \cite{EK15}. This discovery had an influence, both theoretical and experimental, to the (much older and more developed) theory of electromagnetic waveguides \cite{LCM99} from the simple reason that, at least in some circumstances, the corresponding Maxwell equations are well approximated by the same Helmholz equation that one encounters in the quantum case.

This paper is devoted to a class of quantum waveguide structures which so far escaped attention, namely two-dimensional spiral-shaped strips with hard-wall, that is, Dirichlet boundaries. The lack of attention to this problem is somewhat surprising because spiral regions appear often in physics, usually with practical applications in mind. Without aspiring to an exhausting survey, let us mention a few examples. On the quantum side spirals can be used as guides for cold atoms \cite{JLX15} which may help to construct atomic gyroscopes \cite{JLZ16}. Predictably, a number of examples concern electromagnetic or optical systems, for instance \cite{BBP09, CFW15}, sometimes again with practical purpose in mind such as nanoparticle detection \cite{TLY18} or spectrometry \cite{RLB16}, or combinations of spiral guides \cite{CLV14}. Spiral shapes have also been studied for acoustic waveguides \cite{PRB17}.

Spirals involved in all the above examples are finite, however,  we are going to discuss spiral regions of infinite length. This is not only the usual theoretical license, but also another aim of this paper, namely to show that that many spectral properties of such systems have a truly global character. To specify what we have in mind, consider a quantum particle in the plane divided by Dirichlet conditions at concentric circles of radii $r_n=2\pi an,\, n=1,2,\dots\,$, into the family of annular domains with impenetrable boundaries. The spectrum of this system covers the halfline $\big((2a)^{-1},\infty\big)$ being there \emph{dense pure point}, which one can check either directly or using the limit of suitable radially periodic potentials, cf.~\cite{EF07}, and in addition, there is a \emph{discrete spectrum} below $(2a)^{-1}$ which is infinite and accumulates at $(2a)^{-1}$. We will show that the spectral character changes profoundly if the Dirichlet boundary is instead imposed on an Archimedean spiral of the slope $a$, despite the fact that if we observe the two boundaries in a simply connected region sufficiently distant from the center they look very similar.

Archimedean waveguides are used as a case study. Their essential spectrum starts predictably at the lowest transverse Dirichlet eigenvalue, however, it is absolutely continuous away from the transverse thresholds. What is more surprising is that the passage from concentric circles to the spiral region destroys the infinite discrete spectrum mentioned above. We will also consider modifications of this example which involve making a cavity in the center of the region and/or imposing the Dirichlet condition at more than one spiral arm. Not surprisingly, bound states occur provided the central cavity is large enough; we will analyze this example both theoretically and numerically.

After dealing with the Archimedean case we turn to more general spirals. The decisive property for the spectrum is the asymptotic behaviour of the coil width. The extreme cases are when it diverges or shrinks to zero, then the spectrum covers the whole positive halfline or it is purely discrete, respectively. The situation is more interesting when the region is asymptotically Archimedean in the sense that the coil width has a finite and nonzero limit. The character of this asymptotics then determines whether a discrete spectrum is present in the gap between zero and the continuum threshold. We illustrate this effect on the regions the boundary of which interpolates between the Fermat and Archimedean spirals, adding again a numerical analysis of this example. We conclude the paper with a list of open problems.

\section{Case study: Archimedean waveguide}
\label{s:archimedes}
\setcounter{equation}{0}

Let $\Gamma_a$ be the Archimedean spiral in the plane with the slope $a>0$, expressed in terms of the polar coordinates, $\Gamma_a=\{r=a\theta:\, \theta\ge 0\}$, and denote by $\mathcal{C}_a$ its complement, $\mathcal{C}_a := \mathbb{R}^2\setminus\Gamma_a$ which is an open set. The object of our interest is the operator
$$
 H_a = -\Delta^{\mathcal{C}_a}_\mathrm{D},
$$
the Dirichlet Laplacian in $L^2(\mathcal{C}_a)$. In other words, $H_a$ it is the Laplacian in the plane with Dirichlet condition imposed on $\Gamma_a$. Since scaling transformations change $\Gamma_a$ into Archimedean spiral again, with a different slope, the above introduced $H_a$ is unitarily equivalent to $\big(\frac{a'}{a}\big)^2 H_{a'}$ for any $a'>0$. It is thus sufficient to analyze the spectrum for a fixed value of $a$. For {\ae}sthetical reasons we may  choose $a=\frac12$ in which case we drop the the subscript; the general case is restored easily multiplying the length scale quantities by $2a$, energy by $(2a)^{-2}$, etc.

In the polar coordinates our Hilbert space is $L^2((0,\infty)\times[0,2\pi); r\mathrm{d}r\mathrm{d}\theta)$ with Dirichlet condition at $\{(\theta+2\pi n,\theta): n\in\mathbb{N}_0\}$. This can be equivalently written as $L^2(\Omega_a; r\mathrm{d}r\mathrm{d}\theta)$ where $\Omega_a$ is the skewed strip $\Omega_a:= \{(r,\theta):\, r\in (r_\mathrm{min}(\theta),a\theta),\,\theta>0\}$ and $r_\mathrm{min}(\theta):= \max\{0,a(\theta-2\pi)\}$. The Dirichlet condition is imposed at the boundary points\footnote{They are not a part of $\Omega_a$, as we work with the usual convention by which the region where the particle is confined is an open set.} of $\Omega_a$ with $r>0$. As for $r=0$ we note that the boundary of $\mathcal{C}_a$ is not convex there and the spiral end represents an angle $2\pi$, hence the operator domain is
\begin{equation} \label{domain}
 D(H_a) = \mathcal{H}^2(\Omega_a) \cap \mathcal{H}_0^1(\Omega_a) \oplus \mathbb{C}(\psi_\mathrm{sing}),
\end{equation}
cf.~\cite{Ko67}, where
$$
 \psi_\mathrm{sing}(r,\theta) = \chi(r)\,r^{1/2}\sin\frac12\theta
$$
and $\chi$ is a smooth function with compact support not vanishing at $r=0$. In the conventional way we pass to the unitarily equivalent operator $\tilde{H}_a$ on $L^2(\Omega)$ using
$$
 U: L^2(\Omega_a; r\mathrm{d}r\mathrm{d}\theta)\to L^2(\Omega_a), \quad (U\psi)(r,\theta) = r^{1/2} \psi(r,\theta),
$$
which acts as
\begin{equation} \label{tildeH}
 \tilde{H}_af = -\frac{\partial^2 f}{\partial r^2} - \frac{1}{r^2} \frac{\partial^2 f}{\partial\theta^2} - \frac{1}{4r^2}.
\end{equation}
Note that this differential expression is independent of the spiral width parameter $a>0$, the difference is in the curve at which Dirichlet condition is imposed.

We can also write the quadratic form associated with $H_a$ which is
\begin{align*} 
 q_a:\: q_a[\psi] &= \int_0^\infty \int_{r_\mathrm{min}(\theta)}^{a\theta} \Big[ r\Big|\frac{\partial\psi}{\partial r}\Big|^2  + \frac{1}{r}\Big|\frac{\partial\psi}{\partial\theta}\Big|^2 \Big] \mathrm{d}r \mathrm{d}\theta \\
 & =\int_0^\infty \int_{r/a}^{(r+2\pi a)/a} \Big[ r\Big|\frac{\partial\psi}{\partial r}\Big|^2  + \frac{1}{r}\Big|\frac{\partial\psi}{\partial\theta}\Big|^2 \Big] \mathrm{d}\theta \mathrm{d}r \nonumber
\end{align*}
with the domain consisting of function $\psi\in H^1(\Omega_a)$ satisfying Dirichlet condition at the point of $\partial\Omega_a$ with $r>0$ and such that the limit
\begin{equation} \label{singel}
 \lim_{r\to 0+}\, \frac{\psi(r,\theta)}{\sin\textstyle{\frac12}\theta}
\end{equation}
exists and is independent of $\theta$. Note that when we pass from the operator to the quadratic form integrating by parts, the boundary term
$$
 \lim_{r\to 0+}\, \int_0^{2\pi} \overline{\psi}(r,\theta)\,r\frac{\partial\psi}{\partial r}(r,\theta)\, \mathrm{d}\theta
$$
vanishes even for $\psi=\psi_\mathrm{crit} := r^{-1/2}u_\mathrm{crit}$.

\begin{proposition} \label{prop: infspess}
$\inf\sigma_\mathrm{ess}(\tilde{H}_a) \ge (2a)^{-2}$.
\end{proposition}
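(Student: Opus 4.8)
The plan is to use Persson's theorem, which characterises the bottom of the essential spectrum as
$$
\inf\sigma_\mathrm{ess}(\tilde H_a) = \lim_{\Theta\to\infty}\,\inf\Big\{\tilde q_a[f]:\ f\in C_0^\infty(\Omega_a\cap\{\theta>\Theta\}),\ \|f\|_{L^2(\Omega_a)}=1\Big\},
$$
so that it is enough to bound the form of $\tilde H_a$ from below on functions supported far out along the spiral; equivalently one could impose an extra Neumann cut at large $\theta$ and use that the bounded inner piece contributes only discrete spectrum. For $f\in C_0^\infty(\Omega_a)$, integration by parts in (\ref{tildeH}), with no boundary contribution because $f$ vanishes near $\partial\Omega_a$, gives
$$
\tilde q_a[f] = \int\!\!\int_{\Omega_a}\Big[\Big|\frac{\partial f}{\partial r}\Big|^2 + \frac{1}{r^2}\Big|\frac{\partial f}{\partial\theta}\Big|^2 - \frac{1}{4r^2}|f|^2\Big]\,\mathrm{d}r\,\mathrm{d}\theta.
$$
The key observation, and the place where the \emph{radial} width enters, is that at fixed $\theta$ the variable $r$ runs through the transverse slice $(r_\mathrm{min}(\theta),a\theta)$ of the channel, which for $\theta\ge 2\pi$ has length exactly $2\pi a$ and carries Dirichlet conditions at both ends.

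First I would restrict to $\Theta>2\pi$, so that on the support of $f$ the inner endpoint is a genuine Dirichlet wall at $r_\mathrm{min}(\theta)=a(\theta-2\pi)>0$ and the singular structure at $r=0$ plays no role. For each such fixed $\theta$ the one-dimensional Dirichlet inequality on an interval of length $2\pi a$ yields
$$
\int_{a(\theta-2\pi)}^{a\theta}\Big|\frac{\partial f}{\partial r}\Big|^2\,\mathrm{d}r \ \ge\ \Big(\frac{\pi}{2\pi a}\Big)^2\int_{a(\theta-2\pi)}^{a\theta}|f|^2\,\mathrm{d}r \ =\ (2a)^{-2}\int_{a(\theta-2\pi)}^{a\theta}|f|^2\,\mathrm{d}r,
$$
since $(2a)^{-2}$ is precisely the lowest eigenvalue of $-\mathrm{d}^2/\mathrm{d}r^2$ with Dirichlet conditions on such an interval. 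Discarding the nonnegative angular term and using that $\theta>\Theta$ forces $r>a(\Theta-2\pi)$ on the support of $f$, I would bound the inverse-square tail by its supremum there to obtain
$$
\tilde q_a[f]\ \ge\ \Big[(2a)^{-2} - \frac{1}{4a^2(\Theta-2\pi)^2}\Big]\,\|f\|_{L^2(\Omega_a)}^2 .
$$

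Letting $\Theta\to\infty$ the correction term vanishes, and Persson's characterisation then gives $\inf\sigma_\mathrm{ess}(\tilde H_a)\ge(2a)^{-2}$. The argument is short once one recognises that the transverse direction of the waveguide is the radial one and that the term $-\tfrac14 r^{-2}$ inherited from the polar Laplacian is an inverse-square tail which is harmless at infinity. The only points requiring genuine care, rather than routine checking, are the justification of Persson's formula in this non-smooth domain, which I circumvent by testing with functions compactly supported in the interior far from the centre so that the singular element $\psi_\mathrm{sing}$ and the boundary behaviour at $r=0$ never enter, and the verification that such test functions suffice to compute $\inf\sigma_\mathrm{ess}$.
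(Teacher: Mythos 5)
Your proof is correct and follows essentially the same route as the paper: localize to the part of the strip far from the centre, use the one-dimensional Dirichlet bound $(\pi/2\pi a)^2=(2a)^{-2}$ on the radial slices of width $2\pi a$, and observe that the $-\tfrac14 r^{-2}$ term is uniformly small there. The only cosmetic difference is that you package the localization via Persson's theorem while the paper uses a Neumann cut at $\theta=\theta_{\mathrm{N}}$ and bracketing (an alternative you yourself mention), and the paper partially absorbs the $-\tfrac14 r^{-2}$ term with the angular Poincar\'e inequality whereas you simply bound it by its supremum, which works just as well.
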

\begin{proof}
As pointed out above, it is sufficient to consider $a=\frac12$. We cut the skewed strip $\Omega$ into two parts,
$$
 \Omega^\mathrm{N}:= \big\{(r,\theta):\, r\in (\max\{0,\textstyle{\frac12}\theta-\pi\},\textstyle{\frac12}\theta),\,0\le\theta< \theta_\mathrm{N} \big\}
$$
and
$$
 \Omega^\mathrm{N}_\mathrm{c}:= \big\{(r,\theta):\, r\in (\textstyle{\frac12}\theta-\pi,\,\textstyle{\frac12}\theta),\,\theta> \theta_\mathrm{N} \big\}
$$
for some $\theta_0>2\pi$, and consider the operator obtained by imposing additional Neumann condition at $\theta=\theta_\mathrm{N}$. It has the form of a direct sum and estimates our operator from below, $\tilde{H}\ge\tilde{H}^\mathrm{N} \oplus \tilde{H}^\mathrm{N}_\mathrm{c}$. The first component corresponds to a bounded set so that it does not contribute to the essential spectrum. The quadratic form associated with $\tilde{H}^\mathrm{N}_\mathrm{c}$ is
$$
 q^\mathrm{c}[\psi] = \int_{\frac12\theta_\mathrm{N}-\pi}^\infty \int_{\max\{2r,\theta_\mathrm{N}\}}^{2r+2\pi} \Big[ r\Big|\frac{\partial\psi}{\partial r}\Big|^2  + \frac{1}{r}\Big|\frac{\partial\psi}{\partial\theta}\Big|^2 \Big] \mathrm{d}\theta \mathrm{d}r
$$
defined for functions $\psi\in H^1(\Omega^\mathrm{N}_\mathrm{c})$ vanishing for a fixed value of $r$ at $\theta=\max\{2r,\theta_\mathrm{N}\}$ and $\theta=2(r+\pi)$. Consider first the inner integral.  Since Dirichlet condition is imposed on at least one endpoint, we have
$$
 \int_{\max\{2r,\theta_\mathrm{N}\}}^{2r+2\pi} \Big|\frac{\partial\psi(r,\theta)}{\partial\theta}\Big|^2 \mathrm{d}\theta \ge \frac{1}{16} \int_{\max\{2r,\theta_\mathrm{N}\}}^{2r+2\pi} |\psi(r,\theta)|^2 \mathrm{d}\theta
$$
and for $2r\ge\theta_\mathrm{N}$ the coefficient on the right-hand side changes to $\frac14$. For $f\in\mathrm{Dom}(\tilde{H})$ this is equivalent to
$$
 \int_{\frac12\theta_\mathrm{N}-\pi}^\infty \int_{\max\{2r,\theta_\mathrm{N}\}}^{2r+2\pi} \frac{1}{r^2}\,\Big[-\overline{f(r,\theta}) \frac{\partial^2f(r,\theta)}{\partial\theta^2} - \frac{1}{16}\,|f(r,\theta)|^2   \Big]  r\mathrm{d}r\mathrm{d}\theta \ge 0
$$
which implies the following operator estimate,
$$
 \tilde{H}^\mathrm{N}_\mathrm{c} \ge -\frac{\partial^2}{\partial r^2} - \frac{3}{16r^2}.
$$
However, the `vertical' width of $\Omega_0^\mathrm{c}$ is $\pi$ and $r>\frac12\theta_\mathrm{N}-\pi$ where $\theta_\mathrm{N}$ can be chosen arbitrarily large, hence $\inf\sigma_\mathrm{ess}(\tilde{H}) = \inf\sigma_\mathrm{ess}(\tilde{H}^\mathrm{N}_\mathrm{c}) \ge 1$.
\end{proof}

The question about the existence of spectrum below $(2a)^{-2}$ is equivalent to the positivity violation of the shifted quadratic form,
$$
 \psi\mapsto q_a[\psi] - \frac{1}{(2a)^2}\|\psi\|^2.
$$
Since $\psi(r,r/a)=\psi(r,(r+2\pi a)/a)=0$, we have
\begin{equation} \label{poinctheta}
 \int_{r/a}^{(r+2\pi a)/a)} \Big|\frac{\partial\psi(r,\theta)}{\partial\theta}\Big|^2 \mathrm{d}\theta \ge \frac{1}{4} \int_{r/a}^{(r+2\pi a)/a)} |\psi(r,\theta)|^2 \mathrm{d}\theta,
\end{equation}
hence
$$
 q_a[\psi] - \frac{1}{(2a)^2}\|\psi\|^2 \ge p_{(0,\infty)}[\psi],
$$
where
\begin{equation} \label{pform}
 p_{(\alpha,\beta)}[\psi] :=
 \int_\alpha^\beta  \mathrm{d}\theta  \int_{r_\mathrm{min}(\theta)}^{a\theta} \Big[ r\Big|\frac{\partial\psi(r,\theta)}{\partial r}\Big|^2  + \Big(\frac{1}{4r} - \frac{r}{4a^2}\Big)|\psi(r,\theta)|^2 \Big] \mathrm{d}r
\end{equation}
If $\alpha\ge2\pi$, we have $r_\mathrm{min}(\theta) = a(\theta-2\pi)$, and
\begin{align*}
 & \int_{a(\theta-2\pi)}^{a\theta} r\Big|\frac{\partial\psi(r,\theta)}{\partial r}\Big|^2 \mathrm{d}r = \int_{a(\theta-2\pi)}^{a\theta} \Big[ -\overline{f}(r,\theta) \frac{\partial^2 f(r,\theta)}{\partial r^2} - \frac{1}{4r^2}|f(r,\theta)|^2\Big] \mathrm{d}r \\ &\ge \frac{1}{4a^2}\|f\|^2 - \int_{a(\theta-2\pi)}^{a\theta} \frac{1}{4r^2}|f(r,\theta)|^2 \mathrm{d}r = \int_{a(\theta-2\pi)}^{a\theta} \Big(\frac{r}{4a^2} - \frac{1}{4r} \Big)|\psi(r,\theta)|^2 \mathrm{d}r
\end{align*}
because $f(a(\theta-2\pi),\theta)=f(a\theta,\theta)=0$, which implies
\begin{equation} \label{classforb}
 p_{(\alpha,\beta)}[\psi] \ge 0 \qquad \text{for any}\;\; 2\pi\le\alpha <\beta\le\infty.
\end{equation}
On the other hand, to assess $p_{(\alpha,\beta)}[\psi]$ for $(\alpha,\beta)\subset(0,2\pi)$ we have the Dirichlet condition only at $t=a\theta$, not at $r_\mathrm{min}(\theta)=0$, and therefore the factor $\frac{1}{4a^2}$ is the above estimate has to be replaced by $\big(\frac{\pi}{2a\theta}\big)^2 = \big(\frac{\pi}{2r}\big)^2$ which yields the inequality
\begin{equation} \label{estsmall}
 p_{(\alpha,\beta)}[\psi] \ge
 \int_\alpha^\beta  \mathrm{d}\theta  \int_0^{a\theta} \Big(\frac{\pi^2}{4r} - \frac{r}{4a^2}\Big)|\psi(r,\theta)|^2 \mathrm{d}r.
\end{equation}
This means that $p_{(\alpha,\beta)}[\psi]\ge 0$ for $\beta\le\pi$ and the only negative contribution can come from the interval $(\pi,2\pi)$, in particular, that there can be at most a finite number of bound states. In Sec.~\ref{s:cavity} we will present a convincing numerical evidence that the discrete spectrum is in fact empty.

From what we know about curved hard-wall waveguides this conclusion may seem surprising. In order to understand the reason, we look at the problem from a different point of view introducing another parametrization of $\mathcal{C}_a$, this time by locally orthogonal coordinates -- sometimes called \emph{Fermi} or \emph{parallel} -- in the spirit of \cite[Chap.~1]{EK15}. The Cartesian coordinates of the spiral are
$$
 x_1 = a\theta\cos\theta,\quad x_2 =a\theta\sin\theta,
$$
hence the tangent and (inward pointing) normal vectors are
\begin{align*}
t(\theta) &= \frac{1}{\sqrt{1+\theta^2}}\, \big(\cos\theta-\theta\sin\theta, \sin\theta+\theta\cos\theta\big), \\
n(\theta) &= \frac{1}{\sqrt{1+\theta^2}}\, \big(-\sin\theta-\theta\cos\theta, \cos\theta-\theta\sin\theta\big).
\end{align*}
The transverse coordinate $u$ will then measure the distance from $\Gamma_a$,
\begin{align*}
x_1(\theta,u) &= a\theta\cos\theta - \frac{u}{\sqrt{1+\theta^2}}\, \big(\sin\theta+\theta\cos\theta\big), \nonumber \\[-.7em] 
\\[-.7em] \nonumber
x_2(\theta,u) &= a\theta\sin\theta + \frac{u}{\sqrt{1+\theta^2}}\, \big(\cos\theta-\theta\sin\theta\big),
\end{align*}
with $u>0$. A natural counterpart to the variable $u$ is the \emph{arc length} of the spiral given by
\begin{subequations}
\label{archilength}
\begin{equation} \label{arclength}
 s(\theta) = a \int_0^\theta \sqrt{1+\xi^2}\,\mathrm{d}\xi =  \textstyle{\frac12}a \big( \theta \sqrt{1+\theta^2} + \ln(\theta + \sqrt{1+\theta^2})\big)
\end{equation}
which for large values of $\theta$ behaves as
\begin{equation} \label{arclengthas}
 s(\theta) =  \textstyle{\frac12}a \theta^2 + \mathcal{O}(\ln\theta).
\end{equation}
\end{subequations}
While we do not have an explicit expression for the function inverse to \eqref{arclength}, the last relation yields at least its asymptotic behaviour. Another quantity of interest is the \emph{curvature} of the spiral given by
 \begin{subequations}
\label{archicurv}
\begin{equation} \label{curvature}
 \kappa(\theta) =\frac{2+\theta^2}{a(1+\theta^2)^{3/2}} = \frac{1}{a\theta} + \mathcal{O}(\theta^{-2}) \quad \text{as}\;\; \theta\to\infty
\end{equation}
which means that
\begin{equation} \label{curvas}
 \kappa(s) =\frac{1}{ \sqrt{2as}} + \mathcal{O}(s^{-1}) \quad \text{as}\;\; s\to\infty.
\end{equation}
\end{subequations}
With an abuse of notation we will denote the points of $\mathcal{C}_a$ as $x(s,u)$, however, we have to keep in mind that the described parametrization cannot be used globally, as it becomes non-unique for small $\theta$ when the normal to $\Gamma$ fails to cross the previous coil of the spiral; this obviously happens for $\theta<\theta_0$ with some $\theta_0\in (\frac32\pi,2\pi)$.

Nevertheless, we can use it elucidate the properties of $H_a$ that depend on the behaviour of $\Gamma_a$ at large values of $s$. To be specific, we use the decomposition $L^2(\mathcal{C}_a) = L^2(\mathcal{C}_{a,\mathrm{c}}) \oplus L^2(\mathcal{C}_{a,\mathrm{nc}})$, where
\begin{equation} \label{noncomp}
 \mathcal{C}_{a,\mathrm{nc}} =\big\{ x(s,u):\: s>s(2\pi), u\in(0,d(s)) \big\},
\end{equation}
where $d(s)$ is the distance of the point $x(s,0)$ of $\Gamma_a$ from the previous coil of the spiral, $s(2\pi) \approx \frac12 a \times 42.51 \approx 1.077 \times \frac12 a (2\pi)^2$, and $\mathcal{C}_{a,\mathrm{c}}$ is (the interior of) the complement $\mathcal{C}_a \setminus \mathcal{C}_{a,\mathrm{nc}}$. By $H^\mathrm{D}_{a,\mathrm{nc}}$ and $H^\mathrm{N}_{a,\mathrm{nc}}$ we denote the restriction of $H_a$ to $L^2(\mathcal{C}_{a,\mathrm{nc}})$ with the Dirichlet and Neumann condition, respectively, imposed at the perpendicular cut referring to $s=s(2\pi)$.

The question is now about $d(s)$, the range of the variable $u$, or in other words, the transverse width of $\mathcal{C}_a$ at a given $s$. It is given by the intersection of the normal to $\Gamma$ with the previous coil of the spiral. Let us denote the corresponding angle as $\theta_-$, then we have to solve the equations
\begin{align*}
a\theta\cos\theta - \frac{u}{\sqrt{1+\theta^2}}\, \big(\sin\theta+\theta\cos\theta\big) =&\: a\theta_-\cos\theta_-, \\
a\theta\sin\theta + \frac{u}{\sqrt{1+\theta^2}}\, \big(\cos\theta-\theta\sin\theta\big) =&\: a\theta_-\sin\theta_-.
\end{align*}
This yields, in particular, the relations
\begin{align*}
\theta \Big(a - \frac{u}{\sqrt{1+\theta^2}}\Big) =&\: a\theta_-\cos(\theta-\theta_-), \\
- \frac{u}{\sqrt{1+\theta^2}} =&\: a\theta_-\sin(\theta-\theta_-),
\end{align*}
which can be viewed as equations for $u$ and $\theta_-$, and from that we get
\begin{equation} \label{transw}
 \theta^2 \Big(a^2 - \frac{2au}{\sqrt{1+\theta^2}}\Big) + u^2 = a^2\theta_-^2.
\end{equation}
and
\begin{equation} \label{transang}
 \theta\big( 1+\theta_-\sin(\theta-\theta_-)\big) = \theta_-\cos(\theta-\theta_-).
\end{equation}

We can get easily an asymptotically exact lower bound to the width $d(s)$. Abusing again the notation we write it as $d(\theta)$ meaning $d(s(\theta))$. The radial dropped from the point $x(\theta,0)$ towards the coordinate center crosses the previous coil of $\Gamma_a$ at $x(\theta',0)$ where $\theta'=\theta-2\pi$. The spiral slope at this point, i.e. the angle $\beta(\theta')$ between the tangent to $\Gamma_a$ and the tangent the the circle passing through this point is easily found: we have
$$
\cos\beta(\theta') = \frac{\theta'}{\sqrt{1+\theta'^2}},
$$
and since the radial distance between the two coils is $2\pi a$, we get
\begin{equation} \label{transest}
d(\theta) > \frac{2\pi a\theta'}{\sqrt{1+\theta'^2}} = \frac{2\pi a\theta}{\sqrt{1+\theta^2}} \big(1+\mathcal{O}(\theta^{-1})\big),
\end{equation}
in particular, the transverse contribution to the energy is
$$
\frac{\pi^2}{d(\theta)^2} < \frac{1}{(2a)^2}\big(1+(\theta-2\pi)^{-2}\big) = \frac{1}{(2a)^2}\Big(1 + \frac{1}{\theta^2} + \mathcal{O}(\theta^{-3}) \Big)
$$
To see that the bound \eqref{transest} to the solution $u=d(\theta)$ of \eqref{transw} is asymptotically exact, let us put $\theta_-=\theta'+\delta$ and rewrite \eqref{transang} as
$$
0 = F(\eta,\delta):= \big( 1+\eta(2\pi-\delta)\big) (\eta-\sin\delta) - \eta\cos\delta,
$$
where $\eta:= \theta_-^{-1}$ and use the implicit function theorem to solve this equation in the vicinity of $(\eta,\delta)=(0,0)$. The solution exists because $f_\delta = \frac{\partial F}{\partial\delta}$ satisfies $F_\delta(0,0)=1\ne 0$ and we have
\begin{align*}
\frac{\mathrm{d}\delta}{\mathrm{d}\eta}(0) &= - \frac{F_\eta}{F_\delta}(0,0) = 0, \\
\frac{\mathrm{d}^2\delta}{\mathrm{d}\eta^2}(0) &= -\frac{1}{F_\delta^3}\big( F_{\eta\eta}F_\delta^2 -2F_{\eta\delta}F_\eta F_\delta + F_{\delta\delta}F_\eta^2\big)(0,0) = 4\pi,
\end{align*}
so that
$$
\delta(\theta_-) = \frac{2\pi}{\theta_-^2} + \mathcal{O}(\theta_-^{-4}) = \frac{2\pi}{\theta^2} \big(1+ \frac{4\pi}{\theta} + \mathcal{O}(\theta^{-2})\big).
$$
On the other hand, \eqref{transw} is solved by
$$
d(\theta) = \frac{a\theta^2}{\sqrt{1+\theta^2}} - \sqrt{\frac{a^2\theta^4}{1+\theta^2} - a^2(\theta^2-\theta_-^2)}
$$
and substituting $\theta_- = \theta - 2\pi + \delta(\theta)$ we get
$$
d(\theta) = \frac{2\pi a\theta}{\sqrt{1+\theta^2}} \big(1+\mathcal{O}(\theta^{-1})\big).
$$

\bigskip

The coordinates $s,u$ allow us to pass in the standard way \cite[Chap.~1]{EK15} from the the operator $H^\mathrm{D}_{a,\mathrm{nc}}$ and its Neumann counterpart to a unitarily equivalent operator on $L^2(\Sigma_{a,\mathrm{nc}})$, where
$$
 \Sigma_{a,\mathrm{nc}} =\big\{ (s,u):\: s>s(2\pi), u\in(0,d(s)) \big\}
$$
is a semi-infinite strip with the straight `lower' boundary and the varying, but asymptotically constant width, which acts as
\begin{subequations}
\label{straightening}
\begin{equation} \label{straightH}
\hat{H}^\mathrm{D}_{a,\mathrm{nc}}\psi = -\frac{\partial}{\partial s} (1-u\kappa(s))^{-2} \frac{\partial\psi}{\partial s}(s,u) -\frac{\partial^2\psi}{\partial u^2}(s,u) + V(s,u)\psi(s,u),
\end{equation}
where
\begin{equation} \label{effpot}
V(s,u) := -\frac{\kappa(s)^2}{4(1-u\kappa(s))^2} -\frac{u\ddot\kappa(s)}{2(1-u\kappa(s))^3} -\frac54\,\frac{u^2\dot\kappa(s)^2}{(1-u\kappa(s))^4},
\end{equation}
\end{subequations}
with Dirichlet condition at the boundary of $\mathcal{C}_{a,\mathrm{nc}}$, which in $\hat{H}^\mathrm{N}_{a,\mathrm{nc}}$ is replaced by Neumann one at the cut, $s=s(2\pi)$.

The described parametrization allows us to strengthen Proposition~\ref{prop: spess}.
\begin{proposition} \label{prop: spess}
$\sigma_\mathrm{ess}(H_a) = [(2a)^{-2},\infty)$.
\end{proposition}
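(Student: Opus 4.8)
The plan is to prove the two inclusions separately. The inclusion $\sigma_\mathrm{ess}(H_a)\subseteq[(2a)^{-2},\infty)$ is already a consequence of Proposition~\ref{prop: infspess}: the operators $H_a$ and $\tilde H_a$ are unitarily equivalent, and the bound $\inf\sigma_\mathrm{ess}(\tilde H_a)\ge(2a)^{-2}$ says precisely that no essential spectrum lies below the threshold. The whole content of the proposition is therefore the reverse inclusion $[(2a)^{-2},\infty)\subseteq\sigma_\mathrm{ess}(H_a)$, which I would establish by exhibiting a singular (Weyl) sequence for every $\lambda$ in this halfline.

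I would work in the Fermi coordinates of the non-compact part, where $H_a$ takes the straightened form \eqref{straightH} on the half-strip $\Sigma_{a,\mathrm{nc}}$. Given $\lambda\ge(2a)^{-2}$, write $\lambda=(2a)^{-2}+k^2$ with $k\ge0$ and set
$$
 \psi_n(s,u) = \varphi\Big(\frac{s-s_n}{L}\Big)\,\mathrm{e}^{\mathrm{i}ks}\,\sin\frac{\pi u}{d(s)},
$$
where $\varphi\in C_0^\infty(\mathbb{R})$ is a fixed longitudinal window of unit $L^2$-norm, $L>0$ is fixed, and $s_n\to\infty$. The transverse factor is an exact Dirichlet eigenfunction on $(0,d(s))$ with eigenvalue $\pi^2/d(s)^2$, which by \eqref{transest} tends to $(2a)^{-2}$ since $d(s)\to2\pi a$; the plane wave supplies the longitudinal energy $k^2$ in the limit. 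Crucially, for $n$ large the support of $\psi_n$ lies strictly inside $\mathcal{C}_{a,\mathrm{nc}}$ and recedes to infinity, so the boundary condition at the cut $s=s(2\pi)$ is irrelevant and $\psi_n$ may be regarded, after transplanting back through the unitary map, as an admissible element of $D(H_a)$ supported in $\mathcal{C}_{a,\mathrm{nc}}$ on which $H_a$ acts by \eqref{straightH}.

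It then remains to verify that $\|(H_a-\lambda)\psi_n\|=o(\|\psi_n\|)$ while $\|\psi_n\|$ is bounded away from zero; since the supports escape to infinity one also has $\psi_n\rightharpoonup0$, so that the sequence is genuinely singular and $\lambda\in\sigma_\mathrm{ess}(H_a)$. The estimate rests on the asymptotic flatness of the guide: on the receding support the metric coefficient $(1-u\kappa(s))^{-2}\to1$ and the effective potential $V(s,u)\to0$, because $\kappa(s)=\mathcal{O}(s^{-1/2})$ by \eqref{curvas} and its $s$-derivatives decay faster still, so that by \eqref{effpot} one has $V=\mathcal{O}(s^{-1})$.

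The main obstacle is the bookkeeping of the residual terms generated when the longitudinal operator in \eqref{straightH} differentiates the $s$-dependent factors of $\psi_n$: the derivative hits the slowly varying width $d(s)$ inside the transverse profile and the coefficient $(1-u\kappa(s))^{-2}$, producing contributions proportional to $d'(s)$, $\kappa(s)$ and $\dot\kappa(s)$, together with the window derivatives $\varphi'$ and $\varphi''$. None of these is dangerous: from \eqref{arclengthas} and \eqref{curvas} one finds $d'(s)=\mathcal{O}(s^{-2})$, $\kappa(s)=\mathcal{O}(s^{-1/2})$ and $\dot\kappa(s)=\mathcal{O}(s^{-3/2})$, so every geometric correction tends to zero uniformly on the support as $s_n\to\infty$, while the window terms contribute a fixed $\mathcal{O}(L^{-1})$ that is rendered negligible by letting $L=L_n$ grow slowly with $n$. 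Combining this inclusion with the one from Proposition~\ref{prop: infspess} yields $\sigma_\mathrm{ess}(H_a)=[(2a)^{-2},\infty)$.
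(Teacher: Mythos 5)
Your argument is correct and follows essentially the same route as the paper: reduce to the reverse inclusion via Proposition~\ref{prop: infspess}, pass to the straightened operator \eqref{straightH} on $\Sigma_{a,\mathrm{nc}}$, and build a Weyl sequence of the form (window)$\times\,\mathrm{e}^{\mathrm{i}ks}\sin\frac{\pi u}{d(s)}$ whose supports recede to infinity, using the decay of $\kappa$, $\dot\kappa$, $\dot d$ and $V$ to kill the geometric corrections. The only (cosmetic) difference is that you translate a window of separately growing width $L_n$, whereas the paper uses the single-parameter dilation $\mu(\lambda s)$ with $\lambda\to0$, which ties the translation and the spreading together and yields the $\mathcal{O}(\lambda)$ ratio in one stroke.
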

\begin{proof}
In view of Proposition~\ref{prop: infspess} it is sufficient to check the inclusion $\sigma_\mathrm{ess}(H_a)\supset [(2a)^{-2},\infty)$. Let us first prove the analogous result for the operator $\hat{H}^\mathrm{D}_{a,\mathrm{nc}}$, without loss of generality we can put again $a=\frac12$. We introduce the following family of functions,
\begin{equation} \label{weyltest}
\psi_{k,\lambda}(s,u) := \mu(\lambda s)\,\mathrm{e}^{iks} \sin\frac{\pi u}{d(s)}
\end{equation}
with $k\in\mathbb{R}$, where $\mu\in C_0^\infty(\mathbb{R})$ with $\mathrm{supp}\,\mu \subset (1,2)$, and use Weyl's criterion. For $\lambda\in (0,s(2\pi)^{-1})$ the support of $\psi_{k,\lambda}$ lies in $\Sigma_{a,\mathrm{nc}}$. Since $d(s)=\pi\big( 1 - \frac{1}{8s} + \mathcal{O}(s^{-3/2})\big)$ and $s^{-1}<\lambda$, we have
$$ 
 \|\psi_{k,\lambda}\| = \lambda^{-1/2} \|\mu\| + \mathcal{O}(1)
$$ 
as $\lambda\to 0$. Next we have to express the norm
$$
\big\|(\hat{H}^\mathrm{D}_{a,\mathrm{nc}} - 1 - k^2)\psi_{k,\lambda}\big\|.
$$
Application of \eqref{straightH} to \eqref{weyltest} produces a complicated expression, however, it is sufficient to single out the terms which dominate in the limit $\lambda\to 0$. To this aim, we note that the presence of $\lambda \dot\mu(s)$ yields the factor $\lambda^{1/2}\|\dot\mu\|$ in the norm, and $\lambda^2 \ddot\mu(s)$ gives rise similarly to $\lambda^{3/2}\|\ddot\mu\|$. Furthermore, $\dot d(s)= \frac{1}8 s^{-2}+\mathcal{O}(s^{-5/2}) = \mathcal{O}(\lambda^2)$ and
\begin{equation} \label{effasympt}
V(s,u) = -\frac{1}{4s} + \mathcal{O}(s^{-3/2}) = \mathcal{O}(\lambda).
\end{equation}
It is easy to see that after cancelation of the leading terms we get
$$
\Big(-\frac{\partial^2\psi}{\partial u^2} -1\Big)\psi_{k,\lambda}(s,u) = \big(\textstyle{\frac{1}{4s}} + \mathcal{O}(s^{-3/2})\big)\psi_{k,\lambda}(s,u) = \mathcal{O}(\lambda)\psi_{k,\lambda}(s,u).
$$
The expression
$$
\Big(-\frac{\partial}{\partial s} (1-u\kappa(s))^{-2} \frac{\partial}{\partial s} k^2\Big)\psi_{k,\lambda}(s,u) \mathcal{O}(\lambda)\psi_{k,\lambda}(s,u);
$$
is more complicated but the leading terms containing $k^2$ cancel and in the next order we have a single one, namely $-2ik\lambda \dot\mu(\lambda s)\,\mathrm{e}^{iks} \sin\frac{\pi u}{d(s)}$; combining these observations we infer that
$$ 
 \frac{\big\|(\hat{H}^\mathrm{D}_{a,\mathrm{nc}} - 1 - k^2)\psi_{k,\lambda}\big\|}{\|\psi_{k,\lambda}\|} = \mathcal{O}(\lambda) \quad \text{as}\;\; \lambda\to 0
$$ 
so that $1+k^2 \in \sigma(\hat{H}^\mathrm{D}_{a,\mathrm{nc}})$ for any $k\in\mathbb{R}$, and the same holds for $\hat{H}^\mathrm{N}_{a,\mathrm{nc}}$ because the supports of the functions \eqref{weyltest} are separated from the boundary of $\Sigma_{a,\mathrm{nc}}$ at $s=s(2\pi)$. Furthermore, by the indicated unitary equivalence their preimages in the original coordinates constitute a Weyl sequence of the operator $H_a$, the `full' one as the cut as $s=s(2\pi)$ is again irrelevant from the viewpoint of the essential spectrum. Finally, one can choose as sequence $\{\lambda_n\}_{n=1}^\infty$ in such a way that the supports of different $\psi_{k,\lambda_n}$ do nor overlap, say, by putting $\lambda_n=2^{-3-n}$. Then $\psi_{k,\lambda_n}\to 0$ weakly as $n\to\infty$ which means that $1+k^2 \in \sigma_\mathrm{ess}(H_a)$ concluding thus the proof.
\end{proof}

Another question concerns the nature of the essential spectrum. To answer it, we employ \emph{Mourre method}, cf. \cite{Mo81} or \cite[Chap.~4]{CFKS87}. As usual in waveguides, we exclude from the consideration the family of  transverse thresholds at which the spectral multiplicity changes, that is, the set $\mathcal{T}=\big\{\big(\frac{n}{2a}\big)^2:\: n=1,2,\dots\,\big\}$.

\begin{proposition} 
Let $I$ be an open interval, $I\subset [(2a)^{-2},\infty)\setminus\mathcal{T}$, then the spectrum of $H_a$ in $I$ is purely absolutely continuous.
\end{proposition}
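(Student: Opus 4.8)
The plan is to apply the Mourre commutator method in the straightened picture, working with the operator $\hat{H}^\mathrm{D}_{a,\mathrm{nc}}$ of \eqref{straightH} on the semi-infinite strip $\Sigma_{a,\mathrm{nc}}$. As in the proof of the previous proposition, the compact part $\mathcal{C}_{a,\mathrm{c}}$ and the choice of boundary condition at the cut $s=s(2\pi)$ are irrelevant for the spectral type in $I$, so it suffices to analyse the half-cylinder end; put $a=\frac12$. As the conjugate operator I would take the generator of dilations in the longitudinal variable, $A=\frac12\big(\xi(s)\,p_s+p_s\,\xi(s)\big)$ with $p_s=-i\partial_s$, where $\xi$ is smooth, equals $s$ for large $s$, and is cut off to zero near $s=s(2\pi)$ so that $A$ is supported in the region where the Fermi parametrization is well defined. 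Since the width $d(s)$ varies with $s$, I would first rescale the transverse variable, $u\mapsto \pi u/d(s)$, to fix the cross-section to $(0,\pi)$; the asymptotics $d(s)=\pi\big(1-\frac{1}{8s}+\mathcal{O}(s^{-3/2})\big)$ and $\dot d(s)=\mathcal{O}(s^{-2})$ guarantee that the extra terms this produces are small at infinity.

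The computation then splits into three standard ingredients. First, the regularity: one checks that $\hat{H}^\mathrm{D}_{a,\mathrm{nc}}$ is of class $C^2(A)$ (or at least $C^{1,1}(A)$), which amounts to verifying that the first and second commutators with $A$ are bounded relative to the operator. Here the point is that the dilation turns the long-range part of the effective potential into a short-range one: from \eqref{effasympt} one has $V\sim-\frac{1}{4s}$, but $i[V,A]\sim\xi\,\partial_s V=\mathcal{O}(s^{-1})$ and the double commutator is again $\mathcal{O}(s^{-1})$, while the metric factor $(1-u\kappa)^{-2}$ contributes terms carrying the decaying coefficients $\kappa=\mathcal{O}(s^{-1/2})$ and $\dot\kappa=\mathcal{O}(s^{-3/2})$ of \eqref{curvas}. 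Second, the Mourre estimate on $I$: the principal part of $i[\hat{H}^\mathrm{D}_{a,\mathrm{nc}},A]$ is $2\,T_s$, where $T_s=-\partial_s(1-u\kappa)^{-2}\partial_s\ge0$ is the longitudinal kinetic term, all remaining contributions carrying a coefficient that vanishes as $s\to\infty$ and hence being relatively compact. Decomposing into transverse modes and using that $I$ lies strictly between two consecutive thresholds of $\mathcal{T}$, the propagating modes satisfy $T_s\ge\mathrm{dist}(I,\mathcal{T})>0$ on the range of the spectral projection, whereas the evanescent modes are exponentially damped and feed only the relatively compact remainder; this yields $E_I\,i[\hat{H}^\mathrm{D}_{a,\mathrm{nc}},A]\,E_I\ge c\,E_I+K$ with $c>0$ and $K$ compact.

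With these two ingredients Mourre's theorem (see \cite{Mo81} or \cite[Chap.~4]{CFKS87}) gives the limiting absorption principle on $I$ away from eigenvalues, the absence of singular continuous spectrum, and the fact that eigenvalues in $I$ have finite multiplicity and cannot accumulate there. To reach the claimed \emph{pure} absolute continuity I would finally exclude embedded eigenvalues, and this is where the geometric subtlety of the model helps: the attractive tail $-\frac{1}{4s}$ of $V$ is exactly compensated by the increase of the transverse threshold $\pi^2/d(s)^2=1+\frac{1}{4s}+\mathcal{O}(s^{-3/2})$ coming from the narrowing coil width, so that the net longitudinal effective potential decays as $\mathcal{O}(s^{-3/2})$ and is short-range. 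This cancellation is what allows the Mourre estimate to be sharpened to a strict one on small subintervals, whence the virial theorem forbids eigenfunctions; pulling everything back through the unitary equivalences of \eqref{straightening} transfers the conclusion to $H_a$.

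I expect the main obstacle to be precisely the bookkeeping around the varying width. The $s$-dependence of the cross-section couples the transverse operator to $A$, and one must show, using the $\mathcal{O}(s^{-2})$ decay of $\dot d$, that this coupling produces only relatively compact errors; simultaneously the slow $\mathcal{O}(s^{-1/2})$ decay of the curvature has to be handled with care to secure both the $C^2(A)$ regularity and the genuine compactness of $K$, since a naive estimate would leave borderline terms. Establishing the strict estimate needed to rule out embedded eigenvalues, rather than merely the estimate with the compact remainder, is the delicate final point.
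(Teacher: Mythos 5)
Your overall strategy (Mourre theory) matches the paper's, but you have set it up in a way that leaves the decisive step unproved, and it is precisely the step separating ``no singular continuous spectrum'' from the claimed \emph{pure} absolute continuity. A Mourre estimate of the form $E_I\,i[H,A]\,E_I\ge c\,E_I+K$ with $K$ compact yields the limiting absorption principle away from eigenvalues, hence absence of singular continuous spectrum and non-accumulation of finite-multiplicity eigenvalues in $I$; it does \emph{not} exclude embedded eigenvalues. For that you need a strict estimate (no compact remainder) so that the virial theorem applies, and your route to it is circular: a compact remainder can be absorbed on small subintervals only around points of $I$ already known not to be eigenvalues. Your cancellation observation, $\pi^2/d(s)^2+V(s,u)-(2a)^{-2}=\mathcal{O}(s^{-3/2})$, is correct (the paper records it in \eqref{combination}), but the compact remainder $K$ in your estimate comes from the transverse-mode decomposition and the cut, not from the potential tail, so ``the effective potential is short-range'' does not make the estimate strict; absence of embedded eigenvalues for a short-range perturbation of a waveguide would need a separate argument. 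A secondary weak point is the opening reduction: decoupling $\mathcal{C}_{a,\mathrm{c}}$ by a condition at the cut preserves the essential spectrum and, via trace-class arguments, the a.c.\ part, but purity of the spectrum of the decoupled operator does not transfer back to $H_a$, so ``it suffices to analyse the half-cylinder end'' is itself unjustified unless the conjugate operator is defined globally.

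The paper sidesteps all of this by a different choice of conjugate operator: it works in polar coordinates with $\tilde H_a$ on the skewed strip $\Omega_a$ and takes the generator of radial dilations, $A=-\frac{i}{2}\big(r\partial_r+\partial_r r\big)$. Since every term of \eqref{tildeH} is homogeneous of degree $-2$ in $r$, the commutator is computed exactly, $[\tilde H_a,iA]=2\tilde H_a$ as a differential expression; the transverse Poincar\'e inequality \eqref{poinctheta} shows that the angular and centrifugal terms contribute nonnegatively, leaving $E_I[\tilde H_a,iA]E_I\ge -2\partial_r^2\,E_I\ge\frac18 E_I$ with \emph{no} compact remainder, valid globally without any decoupling. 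The virial theorem then immediately forbids eigenvalues in $I$, and Mourre's theorem gives pure absolute continuity. To salvage your arc-length version you would have to either supply an independent proof of absence of embedded eigenvalues, or find a conjugate operator in the $(s,u)$ variables with a globally positive commutator; the polar-coordinate dilation is exactly such an operator, and exploiting the exact homogeneity is what makes the paper's proof short.
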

\begin{proof}
In the spirit of Mourre's method, we have to find a suitable conjugate operator $A$. Working with the unitarily equivalent operator $\tilde{H}_a$ on $\Omega_a$, we choose
\begin{equation} \label{conjugate}
A = -\frac{i}{2}\Big( r\frac{\partial}{\partial r} + \frac{\partial}{\partial r}r \Big)
\end{equation}
with the domain consisting of functions from $\mathcal{H}^1(\Omega_a)$ satisfying Dirichlet condition at the boundary of $\Omega_a$ except its part corresponding to $r=0$. This is the generator of the group $\{\mathrm{e}^{itA}:\, t\in\mathbb{R}\}$ of dilations of the skew strip $\Omega_a$ in the direction parallel to the line $r=a\theta$. It is obvious that the scaling $\mathrm{e}^{itA}$ preserves the domain of $\tilde{H}_a$ corresponding to $D(H_a)$ given by \eqref{domain}, and from the self-similarity it follows that the map $t\mapsto \mathrm{e}^{itA}(\tilde{H}_a-i)^{-1}\mathrm{e}^{-itA}$ has the needed regularity. The commutator $[\tilde{H}_a,iA]$ is easily evaluated,
$$
[\tilde{H}_a,iA]f = -2\,\frac{\partial^2 f}{\partial r^2} - \frac{2}{r^2}\,\frac{\partial^2 f}{\partial \theta^2} - \frac{1}{2r^2} f.
$$
The corresponding quadratic form can be estimated as in \eqref{poinctheta} which shows that the contribution of the of the last two terms in nonnegative and we have
\begin{equation} \label{mourre}
E_{\tilde{H}_a}(I)[\tilde{H}_a,iA]E_{\tilde{H}_a}(I) \ge -2\,\frac{\partial^2}{\partial r^2}\, E_{\tilde{H}_a}(I) \ge \frac18\, E_{\tilde{H}_a}(I)
\end{equation}
because, with at least one boundary Dirichlet, one has
$$
\int_{r_\mathrm{min}(\theta)}^{a\theta} \Big| \frac{\partial f(r,\theta)}{\partial r}\Big|^2 \mathrm{d}r \ge \frac{1}{16} \int_{r_\mathrm{min}(\theta)}^{a\theta} |f(r,\theta)|^2 \mathrm{d}r.
$$
Since the bound \eqref{mourre} contains no compact part, there are no embedded eigenvalues and the spectrum of $H_a$ in $I$ is purely absolutely continuous.
\end{proof}

As we hinted already, a reader familiar with the theory of quantum waveguides \cite{EK15} might expect that there a rich discrete spectrum below $(2a)^{-2}$ because the effective curvature-induced potential \eqref{effpot} is attractive, and moreover, of long-range character as we noted in \eqref{effasympt}. It is not the case, however, and the reason is that the width of $\Sigma_{a,\mathrm{nc}}$ equals $(2a)^{-2}$ only asymptotically. In the sense of quadratic forms we have
$$
\hat{H}^\mathrm{D}_{a,\mathrm{nc}} \ge -\frac{\partial}{\partial s} (1-u\kappa(s))^{-2} \frac{\partial}{\partial s} + W(s,u),
$$
where $W(s,u) := \big(\frac{\pi}{d(s)}\big)^2 + V(s,u)$. Since we do not have an analytic expression for the inverse function to $s(\theta)$, it is more practical to write these quantities in terms of the angular variable $\theta$. We can use the implicit function theorem again to find $d(\theta)$ by (machine assisted) solution of equation \eqref{transang} with respect to $\theta$ directly, obtaining
\begin{equation} \label{transen}
\Big(\frac{\pi}{d(s)}\Big)^2 = \frac{1}{4a^2} + \frac{1}{4a^2\theta^2} + \frac{\pi}{2a^2\theta^3} + \frac{\pi^2}{a^2\theta^4} + \frac{\pi(4\pi^2-1)}{4a^2\theta^5} + \mathcal{O}(\theta^{-6}).
\end{equation}
One the other hand, $V(\theta,u)$ can be expressed from the definition relation \eqref{effpot} using $\dot\kappa(s)=a^{-1}(1+\theta^2)^{-1/2}\dot\kappa(\theta)$ and $\ddot\kappa(s)=a^{-2}(1+\theta^2)^{-1}\ddot\kappa(\theta)$; expanding then the curvature and its derivatives we get
$$
V(s,u) = - \frac{1}{4a^2\theta^2} - \frac{u}{2a^2\theta^3} - \frac{a^2+3u^2}{4a^2\theta^4} + \frac{u(7a^2+4u^2)}{4a^5\theta^5} + \mathcal{O}(\theta^{-6}).
$$
This shows that the contributions due to the strip width and from the effective potential are competing: from the above expansions we get
\begin{equation} \label{combination}
W(s,u) = \frac{1}{4a^2} + \frac{\pi a-u}{2a^3\theta^3} + \frac{a^2(4\pi^2-1)-3u^2}{4a^4\theta^4} + \mathcal{O}(\theta^{-5}).
\end{equation}
We see that, apart from the constant corresponding the continuum threshold, the leading terms cancel mutually here, and the transverse mean of the following one vanishes, modulo a correction coming from the difference between $d(\theta)$ and $2\pi a$. These heuristic considerations correspond well to the observation made above about the sign of the contribution to the form $p_a$ from the region with $\theta>2\pi$.

\section{A variation: spiral waveguide with a cavity}
\label{s:cavity}
\setcounter{equation}{0}

Consider now the situation when one `erases' a part of the Dirichlet boundary imposing the condition on the `cut' spiral $\Gamma_{a,\beta}$ for some $\beta>0$, where $\Gamma_{a,\beta}=\{r=a\theta:\, \theta\ge\beta\}$. The particle is thus localized in its complement, $\mathcal{C}_{a,\beta} := \mathbb{R}^2\setminus\Gamma_{a,\beta}$ which is open set, and its Hamiltonian, modulo unimportant physical constants, is
$$
 H_{a,\beta} = -\Delta^{\mathcal{C}_{a,\beta}}_\mathrm{D},
$$
the Dirichlet Laplacian in $L^2(\mathcal{C}_{a,\beta})$. The above results about the essential spectrum reflect the behaviour of the system outside a compact region, hence they are not affected by the presence of the cavity,
$$
 \sigma_\mathrm{ess}(\tilde{H}_{a,\beta}) = [(2a)^{-2}, \infty).
$$
On the other hand, the discrete spectrum is for sure nonempty provided $\beta$ is large enough.

\begin{proposition} \label{prop: disccavity}
There is a critical angle $\beta_1 = 2j_{0,1} \approx 4.805 \approx 1.531\pi$ such that $\sigma_\mathrm{disc}(H_{a,\beta}) \ne\emptyset$ holds for all $\beta>\beta_1$. Furthermore, let $\mathcal{B}=\{\beta_j\}_{j=1}^\infty$ be the sequence
\begin{equation} \label{critbeta}
\mathcal{B} = \big\{2j_{0,1}, 2j_{1,1}, 2j_{1,1}, 2j_{2,1}, 2j_{2,1}, 2j_{0,2}, 2j_{1,2}, 2j_{1,2}, \dots \big\}
\end{equation}
consisting of multiples of the zeros of Bessel functions $J_n$, $\,n=0,1,\dots$, arranged in the ascending order in which the contributions from $J_n$ with $n\ge1,$ are doubled, then for any $\beta>\beta_j$ the operator $H_{a,\beta}$ has at least $j$ eigenvalues, the multiplicity taken into account\footnote{We expect the discrete spectrum to be simple but we it would need a separate analysis to prove this conjecture; we postpone it to a future work.}.
\end{proposition}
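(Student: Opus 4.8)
The plan is to bound the low-lying spectrum of $H_{a,\beta}$ from above by \emph{Dirichlet bracketing against an inscribed central disk}. Since making the cavity does not alter the behaviour of the region at large $s$, I may use the fact, recorded just above the proposition, that $\inf\sigma_{\mathrm{ess}}(H_{a,\beta})=(2a)^{-2}$. It therefore suffices to exhibit, for each $j$, a $j$-dimensional subspace of the form domain on which the Rayleigh quotient stays strictly below $(2a)^{-2}$: by the min-max principle this forces at least $j$ eigenvalues beneath the essential spectrum, i.e.\ at least $j$ points of $\sigma_{\mathrm{disc}}$ counted with multiplicity.

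First I would locate the disk. Every point of the cut spiral $\Gamma_{a,\beta}$ has radius $a\theta\ge a\beta$, so for each $R<a\beta$ the open disk $D_R=\{r<R\}$ centred at the origin lies in $\mathcal{C}_{a,\beta}$. Functions in $H_0^1(D_R)$, extended by zero to $\mathcal{C}_{a,\beta}$, are smooth at the origin and vanish near $\Gamma_{a,\beta}$; they thus belong to the form domain of $H_{a,\beta}$ and on them the quadratic form reduces to $\int_{D_R}|\nabla\psi|^2$. Equivalently, domain monotonicity for Dirichlet forms gives $\lambda_k(H_{a,\beta})\le\lambda_k(-\Delta^{D_R}_{\mathrm{D}})$ for all $k$, where $\lambda_k$ denotes the $k$-th min-max value.

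Next I would read off the disk spectrum. The Dirichlet eigenvalues of $D_R$ are $(j_{n,k}/R)^2$ with $n\ge0,\ k\ge1$, the eigenfunctions being $J_0(j_{0,k}r/R)$ for $n=0$ (simple) and $J_n(j_{n,k}r/R)\cos n\varphi$, $J_n(j_{n,k}r/R)\sin n\varphi$ for $n\ge1$ (multiplicity two). Hence, listed with multiplicity in increasing order, the quantities $R\sqrt{\lambda_k}$ are exactly the entries of $\tfrac12\mathcal{B}$, the doubling for $n\ge1$ producing the repetitions in \eqref{critbeta}; in particular the $j$-th disk eigenvalue is $(\beta_j/2R)^2$. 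Given $\beta>\beta_j$, choose $R\in(a\beta_j,a\beta)$, a nonempty interval; then $(\beta_j/2R)^2<(2a)^{-2}$, so the span of the first $j$ Dirichlet eigenfunctions of $D_R$ is a $j$-dimensional trial space whose Rayleigh quotient does not exceed $(\beta_j/2R)^2<(2a)^{-2}$. By the bracketing of the previous paragraph and min-max, $H_{a,\beta}$ has at least $j$ eigenvalues below $\inf\sigma_{\mathrm{ess}}$; the case $j=1$ yields $\beta_1=2j_{0,1}$.

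Rather than a genuine obstacle, the main thing to get right is the bookkeeping. One must confirm that extension by zero really lands in the form domain: here this is painless because the trial functions are supported in $D_R$ with $R<a\beta$ and so vanish in a neighbourhood of the slit endpoint of $\Gamma_{a,\beta}$ (sitting at radius $a\beta$), where the non-convex corner would otherwise contribute a singular component. One must also verify that the ordering-with-multiplicity of the $(j_{n,k}/R)^2$ reproduces \eqref{critbeta} exactly. Finally it is worth stressing that the argument is one-sided: the inscribed disk delivers only the upper bounds needed for the \emph{existence} of eigenvalues for $\beta>\beta_j$, and does not show that $\beta_j$ is a sharp threshold; ruling out bound states below $\beta_1$ would require a complementary lower bound, for which the numerical evidence of this section is invoked instead.
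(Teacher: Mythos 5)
Your proposal is correct and follows essentially the same route as the paper: Dirichlet bracketing against the disk of radius (just under) $a\beta$ centred at the origin, whose Dirichlet eigenvalues $(j_{n,k}/R)^2$, ordered with the multiplicity two for $n\ge1$, are compared with the essential-spectrum threshold $(2a)^{-2}$ to produce the critical angles $\beta_j=2j_{n,k}$. Your version is slightly more careful about the form-domain inclusion and the multiplicity bookkeeping, and you rightly note the argument is one-sided; the paper's only addition is a remark that the alternative osculating circle yields a worse constant.
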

\begin{proof}
The result follows by bracketing; we can again put $a=\frac12$. We estimate $H_{1/2,\beta}$ from above by imposing additional Dirichlet condition at a circle that does not intersect with $\Gamma_{1/2,\beta}$. The latter can be chosen in different ways. One possibility is to choose the circle of radius $\frac12\beta$ with the center at the coordinate origin, alternatively we can take the osculation circle to $\Gamma_{1/2,\beta}$ at the point $\theta=\beta$ the radius of which is
$$
 \rho(\beta) := \frac{(1+\beta^2)^{3/2}}{2(2+\beta^2)}.
$$
We want the circle to be as large as possible, from that point of view the first option is preferred for $\beta>\sqrt{\frac12(\sqrt{5}-1)} \approx 0.786$. The claim will be valid if the lowest Dirichlet eigenvalue in this circle, equal to $j^2_{0,1}(\frac12\beta)^{-2}$ where $j_{0,1}$ is the first zero of Bessel function $J_0$, is smaller than $\inf\sigma_\mathrm{ess}(H_{1/2,\beta})=1$ which happens if $\beta>2j_{0,1}$. The second choice would give $\rho(\beta)>j_{0,1}$ leading to a cubic equation for the square of the critical angle which has a single positive root $\beta_1^2 \approx 24.0189$; this yields a slightly worse result, $\beta_1 \approx 4.901 \approx 1.560\pi$. The same bracketing argument yields the critical values of the angle for a larger number of eigenvalues.
\end{proof}

Note that the bracketing bound could be still improved because one can embed into the cavity a slightly larger circle by shifting its center, however, we avoid trying that as the obtained sufficient condition would be rough only as we will see below.

\subsection{Numerical results}

To get a better insight we also analyze the problem numerically using FEM techniques. The numerical analysis here and in the following sections is performed on finite spiral regions obtained by restricting the range of the angular parameter $\theta$ to a finite interval sufficiently large to ensure numerical stability. There are different ways to check that such a cutoff is appropriately placed. One is based on bracketing \cite[Sec.~XIII.15]{RS78} which we already employed in several proofs above: imposing Dirichlet and Neumann condition at the cutoff we squeeze the true eigenvalue between the corresponding pair of the `truncated' ones which allows us to estimate the error margin. One can also watch the eigenfunctions decay in both cases and place the cutoff far enough to make their values there sufficiently small.

There is more than one way to address our problem numerically. One can apply the FEM technique to the Laplacian in the spiral region directly, alternatively one can analyze operator \eqref{tildeH} in the (truncated) skewed strip depicted on Fig.~\ref{skewstrip} which is easier to tackle.  Using both schemes and comparing the results represents a useful reliability check; the second method also allows for the spectral method as an alternative approach.
\begin{figure}[ht]
\centering
\hspace{3em}\includegraphics[scale=.8]{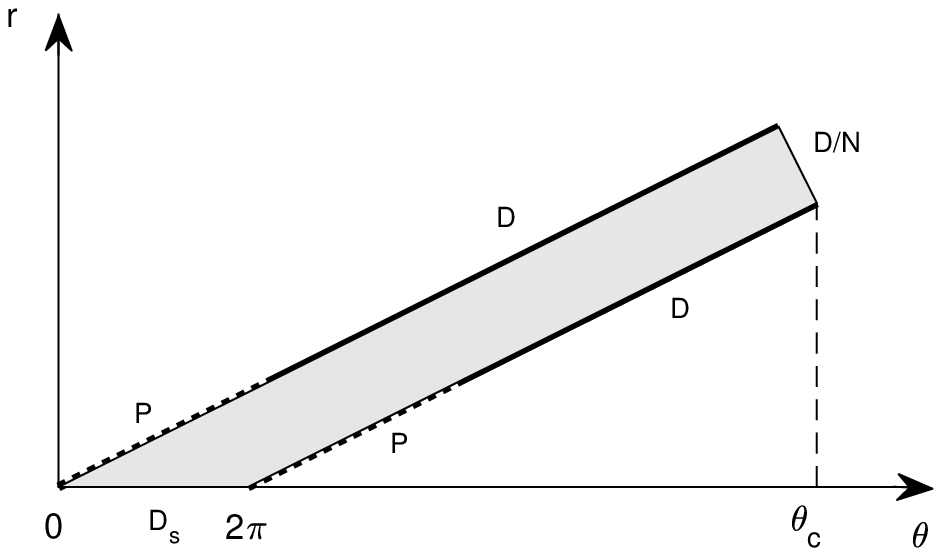}
\caption{A truncated skewed strip for the spiral region with a cavity. For $0<\theta<\beta$ the Dirichlet condition is replaced by the periodic one, at the cutoff both Dirichlet and Neumann conditions are used.}
\label{skewstrip}
\end{figure}

Let us pass to the results. In Fig.~\ref{cavity_ev} we plot the eigenvalues $H_{a,\beta}$ with respect to the cutoff angle $\beta$.
\begin{figure}[ht]
\centering
\includegraphics[scale=1]{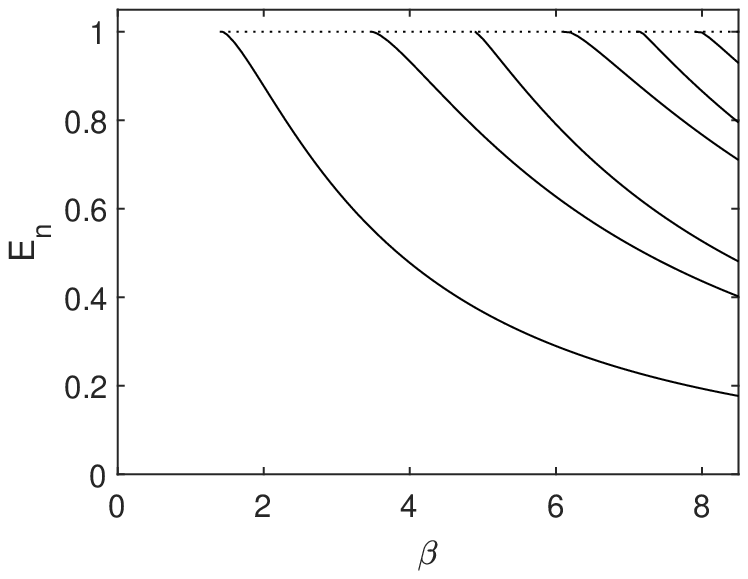}
\caption{Eigenvalues of $H_{1/2,\beta}$ as functions of $\beta$.}
\label{cavity_ev}
\end{figure}
As expected, they are monotonously decreasing functions. We also can identify the critical angle at which the first eigenvalue appears as $\beta_1\approx 1.43 \approx 0.455\pi$. This a substantially smaller value than the sufficient condition of Proposition~\ref{prop: disccavity}, and what is more important, it provides the indication that the discrete spectrum of the `full' Archimedean spiral region is void mentioned in the previous section.

Furthermore, in Fig.~\ref{cavity_ef} we plot the eigenfunctions of $H_{a,21/2}$ indicating their horizontal levels, in particular, their nodal lines.
\begin{figure}[ht]
\centering
\includegraphics[scale=.5]{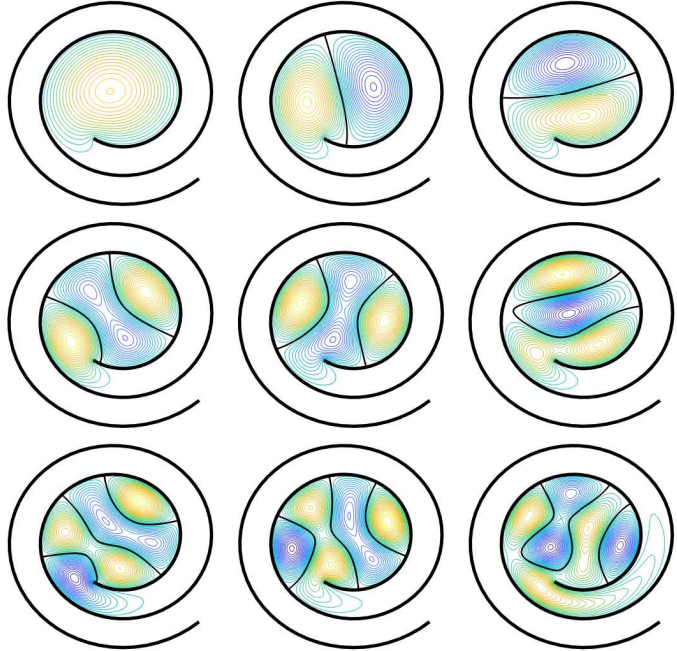}
\caption{The first nine eigenfunctions of $H_{1/2,21/2}$ shown through their horizontal levels (colour online). The corresponding energies are $0.1280$, $0.2969$, $0.3456$, $0.5312$, $05811$, $0.6825$, $0.8266$, $0.8852$, and $0.9768$, respectively.}
\label{cavity_ef}
\end{figure}
The results are in accordance with the Courant nodal domain theorem \cite[Chap.~V.6]{CH53}. We note that the nodal lines are situated in the cavity only which, as well the finiteness of the spectrum, corresponds nicely to the observation expressed by \eqref{classforb} which says that the part of $\mathcal{C}_{a,\beta}$ referring to the angles $\theta>\max\{2\pi,\beta_1\}$ is a classically forbidden zone.

\section{A variation: multi-arm Archimedean waveguide}
\label{s:multi}
\setcounter{equation}{0}

Let $\Gamma_a^m$ be the the union of $m$ Archimedean spirals of the slope $a>0$ with an angular shift, $\Gamma_a^m=\{r=a\big(\theta-\frac{2\pi j}{m}\big):\, \theta\ge \frac{2\pi j}{m},\, j=0,\dots,m-1\}$. As before we denote by $\mathcal{C}_a^m$ its complement, $\mathcal{C}_a^m := \mathbb{R}^2\setminus\Gamma_a^m$, and consider the corresponding Dirichlet Laplacian in $L^2(\mathcal{C}_a^m)$,
$$
 H_a = -\Delta^{\mathcal{C}_a^m}_\mathrm{D},
$$
and note that it again has the scaling property with respect to the slope parameter $a$. Using polar coordinates we work as $L^2(\Omega_a; r\mathrm{d}r\mathrm{d}\theta)$ again but now the skewed strip $\Omega_a$ contains additional Dirichlet boundaries at the lines $r=a\big(\theta+\frac{2\pi}{j}\big),\,j=1,\dots,m-1$. There is a difference, however, coming from the regularity of the boundary. For $m=2$ the set $\mathcal{C}_a^2$ consists of two connected components and has a smooth boundary, for $m\ge 3$ it consists of $m$ connected components separated by the branches of $\Gamma_a^m$, each of them them has an angle at the origin of coordinates which is $\frac{2\pi}{m}$, that is, convex. This means that for any $m\ge 2$ the singular part appearing in \eqref{domain} is missing. Moreover, the operator has a symmetry with respect to the rotation on multiples of the angle $\frac{2\pi}{m}$ which allows us to decompose it into the direct sum of $m$ mutually unitarily equivalent components. It is thus sufficient to investigate operator $\tilde{H}_a^m$ acting as the right-hand side of \eqref{tildeH} in $L^2(\Omega_a^m)$ referring to the skewed strip
$$
 \Omega_a^m:= \big\{(r,\theta):\, r\in (r^m_\mathrm{min}(\theta),a\theta), \theta>0 \big\},
$$
where $r^m_\mathrm{min}(\theta):= \max\big\{0,a\big(\theta-\frac{2\pi}{m}\big)\big\}$, the domain of which is
$$
 D(\tilde{H}_a^m) = \mathcal{H}^2(\Omega_a^m) \cap \mathcal{H}_0^1(\Omega_a^m).
$$
\begin{proposition} 
$\sigma(H_a^m) = \big[\big(\frac{m}{2a}\big)^2,\infty\big)$ for any natural $m\ge 2$. The spectrum is absolutely continuous outside $\mathcal{T}_m=\big\{\big(\frac{mn}{2a}\big)^2\!: n=1,2,\dots\,\big\}$ and its multiplicity is divisible by $m$.
\end{proposition}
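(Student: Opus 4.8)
The plan is to follow the template of Section~\ref{s:archimedes} step by step, tracking how each estimate scales when the angular period $2\pi$ is replaced by $\frac{2\pi}{m}$, and then to add one genuinely new ingredient---a sharp one-dimensional eigenvalue bound---that excludes the discrete spectrum altogether. The multiplicity claim is immediate from the construction preceding the statement: the rotation of $\mathcal{C}_a^m$ by $\frac{2\pi}{m}$ decomposes $H_a^m$ into an orthogonal sum of $m$ mutually unitarily equivalent copies of $\tilde H_a^m$, so $\sigma(H_a^m)=\sigma(\tilde H_a^m)$ and every spectral value of the full operator occurs with multiplicity divisible by $m$. It therefore suffices to analyze $\tilde H_a^m$ on the single skewed strip $\Omega_a^m$ of angular width $\frac{2\pi}{m}$.

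For the location of the essential spectrum I would transcribe the proofs of Propositions~\ref{prop: infspess} and \ref{prop: spess}. The only change is arithmetical: for fixed $r>0$ the angular variable runs over an interval of length $\frac{2\pi}{m}$ with Dirichlet data at both ends, so the counterpart of \eqref{poinctheta} reads $\int|\partial_\theta\psi|^2\,\mathrm{d}\theta \ge \frac{m^2}{4}\int|\psi|^2\,\mathrm{d}\theta$. Feeding this into the Neumann-bracketing argument gives $\inf\sigma_\mathrm{ess}(\tilde H_a^m)\ge\big(\frac{m}{2a}\big)^2$, while the Weyl sequence \eqref{weyltest}, now built on a transverse profile of width $d(s)\to\frac{2\pi a}{m}$ (the radial spacing of consecutive coils of one arm), yields $\big(\frac{m}{2a}\big)^2+k^2\in\sigma_\mathrm{ess}(H_a^m)$ for every $k\in\mathbb{R}$ and hence the reverse inclusion. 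Absolute continuity away from $\mathcal{T}_m$ follows from the same Mourre argument with the same dilation generator \eqref{conjugate}: the commutator $[\tilde H_a^m,iA]$ is unchanged and the strengthened $\theta$-Poincaré inequality only improves the positive bound \eqref{mourre}, so the conclusion survives once the transverse thresholds are relabelled as $\mathcal{T}_m$.

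The substantive new point is the \emph{absence} of discrete spectrum, i.e.\ that the spectrum fills the whole halfline. As in \eqref{pform}, the improved $\theta$-Poincaré inequality applied to the shifted form gives
\[
 q_a^m[\psi] - \Big(\frac{m}{2a}\Big)^2\|\psi\|^2 \;\ge\; \int_0^\infty \mathrm{d}\theta \int_{r^m_\mathrm{min}(\theta)}^{a\theta}\Big[\,r\Big|\frac{\partial\psi}{\partial r}\Big|^2 + \Big(\frac{m^2}{4r} - \frac{m^2 r}{4a^2}\Big)|\psi|^2\Big]\,\mathrm{d}r .
\]
After the substitution $f=r^{1/2}\psi$ I would read the right-hand side as $\int_0^\infty \big(\mathfrak{b}_\theta[f]-\big(\frac{m}{2a}\big)^2\|f\|_{L^2(I_\theta)}^2\big)\,\mathrm{d}\theta$, where $\mathfrak{b}_\theta$ is the form of the Bessel operator $B_\theta=-\frac{\mathrm{d}^2}{\mathrm{d}r^2}+\frac{m^2-1}{4r^2}$ on $I_\theta=(r^m_\mathrm{min}(\theta),a\theta)$ with Dirichlet data at the endpoints having $r>0$. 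Everything then reduces to the claim that the bottom of $B_\theta$ is at least $\big(\frac{m}{2a}\big)^2$ for all $\theta>0$. For $\theta\ge\frac{2\pi}{m}$ the interval has two Dirichlet ends and length $\frac{2\pi a}{m}$, the potential is nonnegative, and the bottom is already $\ge\big(\frac{m}{2a}\big)^2$, which is the analogue of \eqref{classforb}. For $0<\theta<\frac{2\pi}{m}$ the inner endpoint is the regular singular point $r=0$; crucially, for $m\ge2$ the sector is convex and \eqref{domain} carries no singular part, so the admissible solution is the regular one $\sqrt{r}\,J_{m/2}(kr)$---here $\frac{m^2-1}{4}=\nu^2-\frac14$ with $\nu=\frac{m}{2}\ge1$---and the Dirichlet condition at $r=a\theta$ fixes $k=j_{m/2,1}/(a\theta)$, so that the ground-state energy equals $j_{m/2,1}^2/(a\theta)^2$.

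The whole argument thus hinges on $j_{m/2,1}^2/(a\theta)^2\ge\big(\frac{m}{2a}\big)^2$ throughout $\theta<\frac{2\pi}{m}$, which is equivalent to $j_{m/2,1}\ge\pi$. This is exactly where $m\ge2$ is strictly more favourable than the borderline single-arm case: one has $j_{1/2,1}=\pi$ and $\nu\mapsto j_{\nu,1}$ is strictly increasing, so $j_{m/2,1}>j_{1/2,1}=\pi$ for every $m\ge2$ and the bound holds with room to spare. (Equivalently, the problematic one-sided range $(0,2\pi)$ of the single arm---negative only on $(\pi,2\pi)$ by \eqref{estsmall}---now shrinks to $(0,\frac{2\pi}{m})\subseteq(0,\pi]$, and the weaker Neumann-type behaviour forced at the non-convex tip of one arm is replaced by the strictly better regular mode $J_{m/2}$.) Hence the right-hand side above is nonnegative, so $\sigma_\mathrm{disc}(H_a^m)=\emptyset$, which together with the essential-spectrum computation yields $\sigma(H_a^m)=\big[\big(\frac{m}{2a}\big)^2,\infty\big)$. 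The step I expect to demand the most care is this one-sided interval: I must justify the limit-point/Friedrichs behaviour at $r=0$ (that the form domain forces $f=o(r^{1/2})$, so the boundary term in the integration by parts vanishes) and confirm that the variational ground state of $B_\theta$ is indeed the regular Bessel mode; the bracketing, Weyl-sequence and Mourre parts are routine transcriptions of the single-arm proofs.
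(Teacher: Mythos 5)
Your proposal is correct and follows the paper's own strategy essentially verbatim: the same direct-sum decomposition for the multiplicity, the same transcription of the bracketing, Weyl-sequence and Mourre arguments with the angular Poincar\'e constant upgraded to $\frac{m^2}{4}$, and the same shifted-form estimate $q_a^m[\psi]-\big(\frac{m}{2a}\big)^2\|\psi\|^2\ge p^m_{(0,\infty)}[\psi]$ split at $\theta=\frac{2\pi}{m}$. The only deviation is that on $\big(0,\frac{2\pi}{m}\big)$ you invoke the exact Bessel ground state $j_{m/2,1}^2/(a\theta)^2$ together with the monotonicity of $\nu\mapsto j_{\nu,1}$ and $j_{1/2,1}=\pi$, whereas the paper simply uses the doubled Dirichlet--Dirichlet Poincar\'e constant $\pi^2/(a\theta)^2$ on $(0,a\theta)$ (available because the convex sector removes the singular element, so $\psi$ vanishes at $r=0$) and discards the now-nonnegative term $\frac{m^2-1}{4r^2}$ --- both routes reduce to the same inequality $\pi^2/(a\theta)^2\ge\big(\frac{m}{2a}\big)^2$ for $\theta\le\frac{2\pi}{m}$, so your extra limit-point/Friedrichs care at $r=0$ is sound but not actually needed.
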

\begin{proof}
The claim about the multiplicity follows from the mentioned direct sum decomposition. The arguments used in Sec.~\ref{s:archimedes} to determine the essential spectrum and to prove its absolute continuity outside the thresholds modify easily to the present situation, and it is now also easy to check that there is no spectrum below $\big(\frac{m}{2a}\big)^2$. Indeed, the corresponding quadratic form can be estimated by
$$
 q_a^m[\psi] - \Big(\frac{m}{2a}\Big)^2\|\psi\|^2 \ge p_{(0,\infty)}^m[\psi],
$$
where $p_{(0,\infty)}^m$ is analogous to \eqref{pform} with $\frac{r}{4a^2}$ replaced by $\frac{m^2r}{4a^2}$. As before we can check that
$$
 p_{(\alpha,\beta)}^m[\psi] \ge 0 \qquad \text{for any}\;\; \frac{2\pi}{m}\le\alpha<\beta\le\infty;
$$
on the other hand, the functions from the domain of $p^m_{(\alpha,\beta)}$ now satisfy Dirichlet condition at $r=0$, hence the the second term in the bracket in the analogue of \eqref{estsmall} is four times larger, $\big(\frac{mr}{a}\big)^2$ instead of $\big(\frac{mr}{2a}\big)^2$, and consequently, we have $p_{(0,\infty)}^m[\psi]\ge 0$ for any $\psi\in\mathrm{dom}[q_a^m]$.
\end{proof}

A discrete spectrum can be again generated, of course, if a part of the Dirichlet boundaries is removed. As a simple example, consider a circular cavity centered at the origin, obtained by passing from $\Gamma_a^m$ to $\Gamma_{a,\beta}^m=\{r=a\big(\theta +\frac{2\pi}{j}\big):\, \theta\ge\beta,\, j=0,\dots,m-1\}$. Denoting then the corresponding Dirichlet Laplacian by $H_{a,\beta}^m$ we can modify easily the proof of Proposition~\ref{prop: disccavity} to obtain the following result:
\begin{proposition} \label{prop: m-disccavity}
$\sigma_\mathrm{disc}(H_{a,\beta}^m)\ne\emptyset$ holds for all $m\ge 2$ and $\beta> \frac{2j_{0,1}}{m}$. If $\beta>\frac1m\beta_j$, where $\beta_j$ is the $j$-th element of the sequence \eqref{critbeta}, $H_{a,\beta}^m$ has at least $j$ eigenvalues, the multiplicity taken into account.
\end{proposition}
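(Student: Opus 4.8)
The plan is to reproduce the Dirichlet-bracketing argument of Proposition~\ref{prop: disccavity} almost verbatim, the only change being the value of the continuum threshold. First, exactly as for the single-arm cavity, erasing the inner portion of the spiral boundary is a modification supported in a bounded region and therefore does not alter the behaviour at infinity; the reasoning of Section~\ref{s:archimedes}, applied to each of the $m$ symmetry-related components, gives $\sigma_\mathrm{ess}(H_{a,\beta}^m) = [(m/(2a))^2,\infty)$, so that any spectrum below $(m/(2a))^2$ is automatically discrete. By scaling I again set $a = \tfrac12$.

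Next comes the geometry. The circular cavity is delimited by the $m$ inner endpoints of the cut arms, which by the $\tfrac{2\pi}{m}$-rotational symmetry all lie on the circle of radius $a\beta$, each arm spiralling outward from there; consequently the open disk $D$ of radius $a\beta$ centred at the origin is contained in $\mathcal{C}_{a,\beta}^m$ and is its maximal origin-centred inscribed disk. Imposing an additional Dirichlet condition on $\partial D$ decouples $D$ from its exterior, and the resulting decoupled operator is $\ge H_{a,\beta}^m$ in the form sense; hence by min-max $N(\lambda,H_{a,\beta}^m) \ge N(\lambda,\text{decoupled}) \ge N(\lambda,D)$, where $N(\lambda,\cdot)$ counts eigenvalues below $\lambda$. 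Taking $\lambda = (m/(2a))^2$, every Dirichlet eigenvalue of $D$ lying below the continuum edge forces a genuine discrete eigenvalue of $H_{a,\beta}^m$.

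It remains to count. The Dirichlet eigenvalues of $D$ are $j_{n,k}^2(a\beta)^{-2}$, carrying multiplicity two for $n \ge 1$ and one for $n = 0$; the inequality $j_{n,k}^2(a\beta)^{-2} < (m/(2a))^2$ is equivalent to $\beta > 2 j_{n,k}/m$. Arranging the numbers $2 j_{n,k}$ in ascending order with these multiplicities reproduces precisely the sequence $\mathcal{B}$ of \eqref{critbeta}, so that $D$ carries at least $j$ Dirichlet eigenvalues below $(m/(2a))^2$ as soon as $\beta > \beta_j/m$, and the same lower bound then holds for the discrete spectrum of $H_{a,\beta}^m$; the case $j=1$ gives the announced threshold $\beta > 2 j_{0,1}/m$.

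I expect no genuine obstacle, since the essential structure is identical to Proposition~\ref{prop: disccavity}; the two points requiring care are both geometric-bookkeeping in nature. One is the claim that the maximal origin-centred inscribed disk has radius exactly $a\beta$ for every $m \ge 2$, which follows once one notes that every arm leaves the circle $r = a\beta$ outward. The other is the direction of the bracketing inequality: the Dirichlet-decoupled operator must lie \emph{above} $H_{a,\beta}^m$, so that the disk eigenvalues are upper bounds for the true ones, which is exactly what pushes genuine eigenvalues below the continuum edge. Matching the multiplicity pattern of the disk eigenvalues to the sequence $\mathcal{B}$ is then the same elementary argument as in the single-arm case.
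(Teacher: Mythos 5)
Your proof is correct and follows essentially the same route as the paper, which simply modifies the bracketing argument of Proposition~\ref{prop: disccavity}: inscribe the origin-centred disk of radius $a\beta$ into the cavity, impose an extra Dirichlet condition on its boundary, and compare the disk's Bessel-zero eigenvalues $j_{n,k}^2(a\beta)^{-2}$ (with the correct multiplicities) against the raised continuum threshold $(m/(2a))^2$. Both the direction of the bracketing inequality and the bookkeeping with the sequence $\mathcal{B}$ are handled correctly.
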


\subsection{Numerical results}

\begin{figure}[ht]
\centering
\includegraphics[scale=.75]{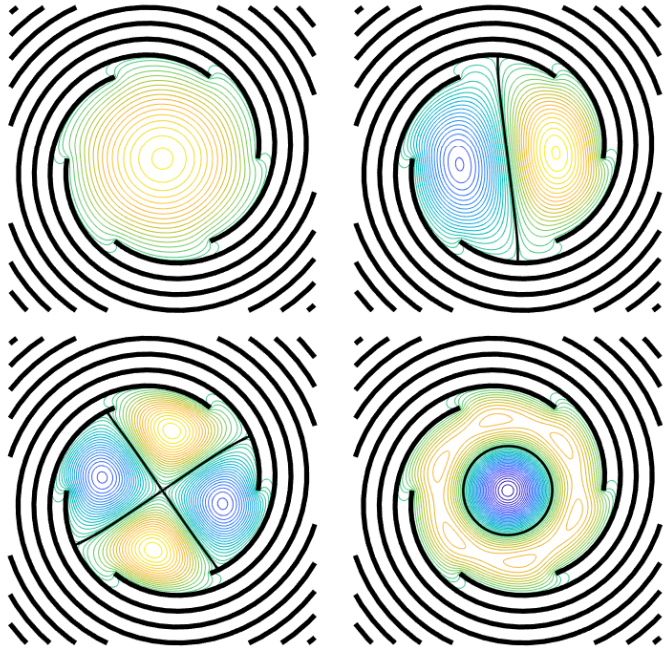}
\caption{The $j$th eigenfunction, $j=1,2,4,6$, of $H^6_{3,2\pi}$, shown through their horizontal levels (colour online). The corresponding energies are $0.1296$, $0.3282$, $0.5871$, and $0.6783$, respectively.}
\label{multicavity_ef}
\end{figure}
To illustrate the results of this part, we plot in Fig.~\ref{multicavity_ef} the eigenfunction $H^m_{a,2\pi}$ for a six-arm spiral region with the central cavity. The picture confirms the expectation that with the growing $m$, the eigenfunctions -- with the possible exception of those corresponding to eigenvalues situated very close to the threshold -- become similar to those of the Dirichlet Laplacian in a disc; it is instructive to compare the nodal lines to those of the single arm region shown in Fig.~\ref{cavity_ef}.

\section{General spirals}
\setcounter{equation}{0}

After looking into the Archimedean case one is naturally interested what can be said about waveguide properties of other spiral shaped regions. A general spiral curve $\Gamma$ is conventionally described in polar coordinates as the family of points $(r(\theta),\theta)$, where $r(\cdot)$ is a given increasing function. Unless specified otherwise, the spirals considered are semi-infinite $r:\mathbb{R}_+ \to \mathbb{R}_+$. In some cases we also consider `fully' infinite spirals for which $r:\mathbb{R} \to \mathbb{R}_+$. The monotonicity of $r$ means that $\Gamma$ does not intersect itself, in other words, the \emph{width function}\footnote{The `inward' coil width at the angle $\theta$ is, of course, $2\pi a(\theta)$; we choose this form with the correspondence to the Archimedean case in mind.}
\begin{equation} \label{coilw}
a:\: a(\theta) = \frac{1}{2\pi}\big(r(\theta)-r(\theta-2\pi)\big).
\end{equation}
is positive for any $\theta\ge2\pi$, or for all $\theta\in\mathbb{R}$ in the fully infinite case. As before we denote $\mathcal{C} := \mathbb{R}^2\setminus\Gamma$ and ask about spectral properties of
$$
 H_r = -\Delta^{\mathcal{C}}_\mathrm{D},
$$
the Dirichlet Laplacian in $L^2(\mathcal{C})$. We restrict our attention to the situation when $r(\cdot)$ is a \emph{$C^1$-smooth function} excluding thus well-known curves such as \emph{Fibonacci spiral}, \emph{spiral of Theodorus}, etc. \cite{wikilist}
\begin{remark} 
{\rm Another modification is represented by \emph{multiarm-arm spirals} generalizing those discussed in Sec.~\ref{s:multi}. Given an $m$-tuple of points $0=\theta_0<\theta_1< \cdots < \theta_{m-1} < 2\pi$ and increasing functions $r_j:\: [\theta_j,\infty)\to\mathbb{R}_+,\: j=0,1,\dots,m-1$, satisfying
$$
a_j(\theta) := \frac{1}{2\pi}\big(r_j(\theta)-r_{j+1}(\theta)\big) >0, \quad \mathcal{H}^2(\Omega) \cap \mathcal{H}_0^1(\Omega),
$$
for all relevant values of $\theta$, we define such a spiral as the family of points $\{(r_j(\theta),\theta):\: j=0,1,\dots,m-1\}$. Note that a two-arm spiral can also be alternatively described by means of a function $r:\mathbb{R}\to\mathbb{R}$ such that $\pm r(\theta)>0$ for \mbox{$\pm\theta>0$} interpreting negative radii as describing vectors rotated by $\pi$. The simplest situation occurs when the system has rotational symmetry, $\theta_j = \frac{2\pi j}{m}$ and $r_j(\theta)=r_0\big(\theta-\frac{2\pi j}{m}\big)$ for \mbox{$0\le j\le m-1$}. The corresponding domain $\mathcal{C}$ then consists of $m$ mutually homothetic components and in view of the Dirichlet conditions at the boundaries one is able to simplify the task by considering one of them only.
}
\end{remark}

One of the most important properties from the waveguide point of view is the asymptotic behaviour of the function of the width function \eqref{coilw}. We call a spiral-shaped region $\mathcal{C}$ \emph{simple} if the function $a(\cdot)$ is monotonous, or in the case of spirals parametrized by $\theta\in\mathbb{R}$ if $a$ is monotonous on each of the halflines $\mathbb{R}_\pm$. A simple $\mathcal{C}$ is called \emph{expanding} and \emph{shrinking} if $a$ is respectively increasing and decreasing for $\theta\ge 0$; these qualifications are labeled as \emph{strict} if $\lim_{\theta\to\infty} a(\theta)=\infty$ and $\lim_{\theta\to\infty} a(\theta)=0$, respectively. For regions referring to multi-arm spirals we use same term if these limit properties apply to all the $a_j,\,j=0,1,\dots,m-1$.

A spiral-shaped region is called \emph{asymptotically Archimedean} if there is an $a_0\in\mathbb{R}$ such that $\lim_{\theta\to\infty} a(\theta) = a_0$, for multi-arm spirals this classification means the existence of finite limits of all the $a_j$. A region $\mathcal{C}$ is obviously unbounded \emph{iff} $\lim_{\theta\to\infty} r(\theta)=\infty$. If the limit is finite, $\lim_{\theta\to\infty} r(\theta)=R$, the closure $\overline{\mathcal{C}}$ is contained in the circle of radius $R$, it may or may not be simply connected as the example of \emph{Simon's jelly roll}, $r(\theta)= \frac34 + \frac{1}{2\pi}\arctan\theta$, shows.

The  Hilbert space can be now written as $L^2(\Omega_r; r\mathrm{d}r\mathrm{d}\theta)$, the skewed strip $\Omega_a$ of the Archimedean case being at that replaced by the region $\Omega_r:= \{(r,\theta):\, r\in (r_\mathrm{min}(\theta),r(\theta)),\,\theta>0\}$, where $r_\mathrm{min}(\theta):= \max\{0,r(\theta-2\pi))\}$ and the Dirichlet condition is imposed at the boundary points with $r>0$, while for $r=0$ we have to add a singular element, so that the domain of $H_r$ is
$$ 
 D(H_r) = \mathcal{H}^2(\Omega_r) \cap \mathcal{H}_0^1(\Omega_r) \oplus \mathbb{C}(\psi_\mathrm{sing})
$$ 
with
$$
 \psi_\mathrm{sing}(r,\theta) = \chi(r)\,r^{1/2}\sin\frac12\theta,
$$
$\chi$ being again a smooth function with a compact support not vanishing at $r=0$. For multi-arm spiral regions we replace a single $\Omega$ by $m$ regions $\Omega_{r,j}:= \{(r,\theta):\, r\in (r_{j,\mathrm{min}}(\theta),r_j(\theta)),\,\theta>0\}$ with $r_{j,\mathrm{min}}(\theta):= \max\{0,r_{j+1}(\theta-\frac{2\pi}{m}))\}$ and the domain of the $j$th component of the operator is $\mathcal{H}^2(\Omega_{r,j}) \cap \mathcal{H}_0^1(\Omega_{r,j})$, $j=0,1,\dots,m-1$\footnote{Unless one of the angles at which the spiral arms meet at the origin is non-convex, of course, in such a case a singular element would be present again.}. Finally, for `fully' infinite spiral regions the operator domain is $D(H_r) = \mathcal{H}^2(\Omega_r) \cap \mathcal{H}_0^1(\Omega_r)$.

As in the particular case discussed above we can get rid of the Jacobian and consider the $\tilde{H_r}$ on $L^2(\Omega_r)$ obtained by means of the unitary transformation
$$
 U: L^2(\Omega_r; r\mathrm{d}r\mathrm{d}\theta)\to L^2(\Omega_r), \quad (U\psi)(r,\theta) = r^{1/2} \psi(r,\theta),
$$
and the action of $\tilde{H}_r$ is again independent of the function $r(\cdot)$,
$$ 
 \tilde{H}_rf = -\frac{\partial^2 f}{\partial r^2} - \frac{1}{r^2} \frac{\partial^2 f}{\partial\theta^2} - \frac{1}{4r^2},
$$ 
and similarly for the multi-arm spiral regions. The associated quadratic form is
\begin{align*} \label{genqform}
 q_r:\: q_r[\psi] &= \int_0^\infty \int_{r_\mathrm{min}(\theta)}^{r(\theta)} \Big[ r\Big|\frac{\partial\psi}{\partial r}\Big|^2  + \frac{1}{r}\Big|\frac{\partial\psi}{\partial\theta}\Big|^2 \Big] \mathrm{d}r \mathrm{d}\theta \\
 & =\int_0^\infty \int_{\theta^{-1}(r)}^{\theta^{-1}(r)+2\pi} \Big[ r\Big|\frac{\partial\psi}{\partial r}\Big|^2  + \frac{1}{r}\Big|\frac{\partial\psi}{\partial\theta}\Big|^2 \Big] \mathrm{d}\theta \mathrm{d}r, \nonumber
\end{align*}
where $\theta^{-1}(\cdot)$ is the pull-back of the function $r(\cdot)$; its domain consists of function $\psi\in H^1(\Omega_r)$ satisfying Dirichlet condition at the point of $\partial\Omega_a$ with the singular element \eqref{singel} added. The modifications for multi-arm and `fully' infinite spirals are again straightforward.

Using function $r$ and its derivatives, we also can express the quantities corresponding to \eqref{arclength} and \eqref{curvature} in the general case; they are
\begin{equation} \label{genarclength}
 s(\theta) = \int_0^\theta \sqrt{\dot{r}(\xi)^2 + r(\xi)^2}\,\mathrm{d}\xi
\end{equation}
and
$$ 
 \kappa(\theta) =\frac{r(\theta)^2 + 2\dot{r}(\theta)^2 - r(\theta)\ddot{r}(\theta)}{(r(\theta)^2 + \dot{r}(\theta)^2)^{3/2}}.
$$ 
In contrast to the Archimedean case, however, it is not always possible to amend the arclength \eqref{genarclength} with the orthogonal coordinate $u$ to parametrize $\mathcal{C}_r$ \emph{globally}, at least with an exception of a finite subset of the plane, by
\begin{align*}
x_1(\theta,u) &= r(\theta)\cos\theta - \frac{u}{\sqrt{\dot{r}(\theta)^2 + r(\theta)^2}}\, \big(\dot{r}(\theta)\sin\theta+r(\theta)\cos\theta\big), \nonumber \\[-.7em] 
\\[-.7em] \nonumber
x_2(\theta,u) &= r(\theta)\sin\theta + \frac{u}{\sqrt{\dot{r}(\theta)^2 + r(\theta)^2}}\, \big(\dot{r}(\theta)\cos\theta-r(\theta)\sin\theta\big).
\end{align*}
The reason is that for strictly expanding spirals the inward normal at a point may not intersect the previous spiral coil; it is easy to check that in the examples of a \emph{logarithmic spiral}, $r(\theta)= a\,\mathrm{e}^{k\theta}$ with $a,k>0$, or \emph{hyperbolic spiral}, $r(\theta)= a\theta^{-1}$, in which cases this happens for all $\theta$ and $\theta^{-1}$, respectively, large enough; note that for the line $n(\theta)$ spanned by the normal to $\Gamma$ at the point $\big(r(\theta),\theta\big)$ we have
$$
 \mathrm{dist}(n(\theta),O) = \frac{r(\theta)\dot{r}(\theta)}{\sqrt{r(\theta)^2 + \dot{r}(\theta)^2}},
$$
where $O$ is the origin of the coordinates. Fortunately, some properties of $H_r$ can be derived without the use of the locally orthogonal system.
\begin{proposition} \label{prop: genspecess}
$\sigma(H_r) = \sigma_\mathrm{ess}(H_r) = [0,\infty)$ holds if $\Gamma$ is simple and strictly expanding.
\end{proposition}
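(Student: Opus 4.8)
The plan is to establish the two inclusions separately. The inclusion $\sigma(H_r)\subseteq[0,\infty)$ is immediate, since $H_r$ is a Dirichlet Laplacian and hence a nonnegative operator; in particular there is no discrete spectrum and it is enough to prove $[0,\infty)\subseteq\sigma_\mathrm{ess}(H_r)$. This I would do by exhibiting, for each fixed $\lambda\ge0$, a singular Weyl sequence, in the spirit of the proof of Proposition~\ref{prop: spess}.

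The geometric input is that a simple strictly expanding region contains inscribed disks of arbitrarily large radius escaping to infinity. Indeed, the coil width $2\pi a(\theta)=r(\theta)-r(\theta-2\pi)$ diverges, so along any fixed ray of polar angle $\theta_0$ the consecutive intersections with $\Gamma$, at radii $r_j:=r(\theta_0+2\pi j)$, have gaps $r_{j+1}-r_j\to\infty$. Placing the centre $x_j$ at the midpoint radius $\rho_j:=\tfrac12(r_j+r_{j+1})$, one expects $\operatorname{dist}(x_j,\Gamma)\to\infty$; passing to a sparse subsequence one then obtains mutually disjoint disks $B_n:=B(x_{j_n},R_n)\subset\mathcal{C}$ with $R_n\to\infty$.

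Granting this, fix $\lambda\ge0$ and choose $k\in\mathbb{R}^2$ with $|k|^2=\lambda$. Let $\chi\in C_0^\infty(\mathbb{R}^2)$ be a fixed bump supported in the unit disk and set $\chi_n(x):=\chi((x-x_{j_n})/R_n)$ and $\psi_n(x):=\chi_n(x)\,\mathrm{e}^{ik\cdot x}$, a smooth function compactly supported in $B_n\subset\mathcal{C}$, hence $\psi_n\in C_0^\infty(\mathcal{C})\subset D(H_r)$ with no singular element needed. Since $|k|^2=\lambda$ the leading term cancels and $(H_r-\lambda)\psi_n=(-\Delta\chi_n-2ik\cdot\nabla\chi_n)\,\mathrm{e}^{ik\cdot x}$; the two–dimensional scaling gives $\|\nabla\chi_n\|/\|\chi_n\|=O(R_n^{-1})$ and $\|\Delta\chi_n\|/\|\chi_n\|=O(R_n^{-2})$, whence $\|(H_r-\lambda)\psi_n\|/\|\psi_n\|=O(R_n^{-1})\to0$. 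As the supports $B_n$ are disjoint, the $\psi_n$ are mutually orthogonal, so $\psi_n\rightharpoonup0$, and Weyl's criterion yields $\lambda\in\sigma_\mathrm{ess}(H_r)$. Since $\lambda\ge0$ was arbitrary, this proves $[0,\infty)\subseteq\sigma_\mathrm{ess}(H_r)$ and completes the argument.

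The main obstacle is the geometric claim that the inradius diverges. The subtlety is that for rapidly (e.g.\ super-exponentially) expanding spirals the pitch angle tends to zero, so a neighbouring coil may approach the midpoint $x_j$ almost tangentially with a small angular separation $\delta$, and the radial estimate $\operatorname{dist}(x_j,\Gamma)\ge|\rho_j-r(\phi)|$ for the nearest point $(r(\phi),\phi)$ of $\Gamma$ then does not by itself control the distance. I would combine it with the angular estimate $\operatorname{dist}(x_j,\Gamma)\ge\tfrac{2}{\pi}\min(\rho_j,r(\phi))\,\delta$: when a coil reaches radius $\approx\rho_j$ with small $\delta$, the mean value theorem gives $\delta=\tfrac12(r_{j+1}-r_j)/\dot r(\xi)$, so the angular bound is $\gtrsim r_{j+1}(r_{j+1}-r_j)/\dot r(\xi)$, and since $r$ is finite for every $\theta$ one cannot have $\dot r\gtrsim r^2$ persist, forcing this quantity to diverge. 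Making this dichotomy uniform, so as to conclude $\operatorname{dist}(x_j,\Gamma)\to\infty$ for a suitable sequence, is the only genuinely technical point; everything else mirrors the Weyl-sequence constructions already used above.
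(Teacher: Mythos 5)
Your argument is essentially the paper's: positivity gives $\sigma(H_r)\subseteq[0,\infty)$, and for each $\lambda=|k|^2$ one builds a Weyl sequence from plane waves cut off by bump functions supported on disjoint inscribed discs of diverging radius escaping to infinity (the paper uses exactly $\frac1n\,\phi\big(\frac1n|\cdot-x_n|\big)\mathrm{e}^{ipx}$ and, like you, asserts rather than proves that such a disc family exists). The one point you flag as unresolved --- that the inradius of $\mathcal{C}$ diverges --- can be closed more cleanly than by your radial/angular dichotomy at ray midpoints, which is delicate precisely because $\dot r$ may be occasionally huge. Use instead the fact that, $r(\cdot)$ being increasing, every circle centred at the origin meets $\Gamma$ in at most one point: for any $\theta$ the open sector $\{(\rho,\phi):\ r(\theta-\pi)<\rho<r(\theta),\ \phi\in(\theta,\theta+\pi)\bmod 2\pi\}$ lies entirely in $\mathcal{C}$, since the unique angle at which the spiral attains a radius in that range lies in $(\theta-\pi,\theta)$, disjoint mod $2\pi$ from $(\theta,\theta+\pi)$. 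Writing $2\pi a(\theta)=\big(r(\theta)-r(\theta-\pi)\big)+\big(r(\theta-\pi)-r(\theta-2\pi)\big)$ shows that for every $\theta$ at least one half-turn increment is $\ge\pi a(\theta)\to\infty$, so one obtains sectors of angular width $\pi$ with unbounded radial thickness and unbounded inner radius; these contain discs of arbitrarily large radius, and a sparse subsequence makes them disjoint. With that lemma in place your Weyl-sequence computation is correct and the proof is complete.
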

\begin{proof}
Under the assumption one can find a disjoint family $\{\mathcal{D}_n\}\subset \mathcal{C}_r$ of discs with centers at points $x_n\in \mathcal{C}_r$ and $\mathrm{diam}\,\mathcal{D}_n=n$. Then it easy to construct a Weyl sequence, say $\frac1n\,\phi\big(\frac1n |\cdot-x_n|\big)\, \mathrm{e}^{ipx}$, where $\phi\in C_0^\infty(\mathbb{R}_+)$ is nonnegative with $\mathrm{supp}\,\phi\subset[0,2)$ and $\|\phi\|=1$, which shows that $|p|^2\in \sigma_\mathrm{ess}(H_r)$ for any $p\in\mathbb{R}^2$. Since $H_r$ is positive by definition, the claim follows.
\end{proof}

On the other hand, the parallel coordinates can be used in regions generated by shrinking or asymptotically Archimedean spirals provided we exclude a suitable central region as we did in \eqref{noncomp}. Using them, for instance, we obtain a result which is an antipode of Proposition~\ref{prop: genspecess}.
\begin{proposition} \label{prop: genspecdisc}
If $\Gamma$ is simple and strictly shrinking, the spectrum of $H_r$ is purely discrete.
\end{proposition}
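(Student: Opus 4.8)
The statement asserts purely discrete spectrum, which for the lower-bounded operator $H_r$ is equivalent to $\sigma_\mathrm{ess}(H_r)=\emptyset$. The plan is to prove the latter by Dirichlet--Neumann bracketing, cf.~\cite[Sec.~XIII.15]{RS78}, reducing everything to a single transverse (radial) Poincar\'e estimate whose constant blows up precisely because the coil width \eqref{coilw} tends to zero. Concretely, I would show that for every $E>0$ one can cut off a compact central part of $\mathcal{C}$ in such a way that the remainder contributes nothing to the spectrum below $E$; since $E$ is arbitrary, $\sigma_\mathrm{ess}(H_r)$ must be empty. I work throughout with the nonnegative form $q_r$ on $L^2(\Omega_r;r\,\mathrm{d}r\,\mathrm{d}\theta)$, which has the advantage that it carries no singular potential.

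Fix $E>0$. Cutting $\Omega_r$ along the radial segment $\theta=2\pi n_0$ for a large integer $n_0$ and imposing an additional Neumann condition there turns $q_r$ into a direct sum and yields the form bound $H_r\ge H^\mathrm{N}_\mathrm{core}\oplus H^\mathrm{N}_\mathrm{out}$, the two summands living over $\{0<\theta<2\pi n_0\}$ and $\{\theta>2\pi n_0\}$, respectively. The core region is bounded, so $H^\mathrm{N}_\mathrm{core}$ has compact resolvent by Rellich compactness of the embedding $\mathcal{H}_0^1\hookrightarrow L^2$ (the one-dimensional singular element attached at the origin is a finite-rank addition, irrelevant here) and hence does not contribute to the essential spectrum. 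By the min-max principle it therefore suffices to bound $H^\mathrm{N}_\mathrm{out}$ from below by $E$, for then $\inf\sigma_\mathrm{ess}(H_r)\ge\inf\sigma(H^\mathrm{N}_\mathrm{out})\ge E$.

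The crucial estimate is a Jacobian-weighted radial Poincar\'e inequality. Dropping the nonnegative angular term in $q_r$ and fixing $\theta>2\pi n_0$, the function $\psi(\cdot,\theta)$ vanishes at both ends of the coil interval $\big(r(\theta-2\pi),r(\theta)\big)$, whose length equals $2\pi a(\theta)$ by \eqref{coilw}; hence $\int|\partial_r\psi|^2\,\mathrm{d}r\ge\frac{1}{4a(\theta)^2}\int|\psi|^2\,\mathrm{d}r$. Estimating the weight by $\inf r=r(\theta-2\pi)$ from below and by $\sup r=r(\theta)$ from above on this interval, I obtain the pointwise bound
\[
 \int_{r(\theta-2\pi)}^{r(\theta)} r\,|\partial_r\psi|^2\,\mathrm{d}r \;\ge\; \frac{r(\theta-2\pi)}{r(\theta)}\,\frac{1}{4a(\theta)^2}\int_{r(\theta-2\pi)}^{r(\theta)} r\,|\psi|^2\,\mathrm{d}r .
\]
Two facts now drive the conclusion. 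First, by monotonicity of $a$ one has $\frac{1}{4a(\theta)^2}\ge\frac{1}{4a(2\pi n_0)^2}$ for all $\theta\ge2\pi n_0$. Second, since $a(\theta)\to0$ while $r$ is increasing and bounded away from zero, the ratio $\frac{r(\theta-2\pi)}{r(\theta)}=1-\frac{2\pi a(\theta)}{r(\theta)}$ tends to $1$ and can be kept above $\tfrac12$ uniformly in $\theta\ge2\pi n_0$ once $n_0$ is large. Integrating over $\theta$ then gives $H^\mathrm{N}_\mathrm{out}\ge\frac{1}{8a(2\pi n_0)^2}$, which exceeds $E$ for $n_0$ large enough.

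The main obstacle — indeed the only place the hypotheses are genuinely used — is this uniform control of the weighted radial Poincar\'e constant: one needs the monotonicity of $a$ to locate the widest (worst) coil in the outer region and the decay $a(\theta)\to0$ together with $r(\theta)$ being bounded away from zero to force the weight ratio to $1$. I would finally remark that the degenerate case $r(\theta)\to R<\infty$ (as for Simon's jelly roll) is harmless: then $\mathcal{C}$ is bounded and discreteness is immediate from Rellich compactness, and moreover the bracketing estimate above applies verbatim, since there too $a(\theta)\to0$ and $r(\theta-2\pi)/r(\theta)\to1$.
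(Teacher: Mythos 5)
Your proof is correct, and it takes a genuinely different route from the paper's. The paper splits off a precompact core with a Neumann cut and then passes, on the non-compact remainder, to the parallel (Fermi) coordinates $(s,u)$ and the straightened operator \eqref{straightening}; discreteness then follows because the transverse term $\pi^2/d(s)^2$ diverges as the perpendicular width $d(s)\to 0$, the curvature-induced potential $V(s,u)$ being treated as a lower-order perturbation. You instead stay in the polar picture on the skewed strip, bracket with a Neumann cut at $\theta=2\pi n_0$, and run a Jacobian-weighted radial Poincar\'e inequality over the coil interval of length $2\pi a(\theta)$, with the weight controlled by the ratio $r(\theta-2\pi)/r(\theta)\to 1$. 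Your version is more elementary and, in a sense, more robust: it needs only the stated $C^1$ regularity of $r(\cdot)$ and the definition \eqref{coilw}, whereas the paper's argument implicitly requires the parallel coordinates to be admissible outside a compact set and enough smoothness for $\kappa,\dot\kappa,\ddot\kappa$ to make sense, and it tacitly uses that $V(s,u)$ does not offset the divergence of $\pi^2/d(s)^2$ (e.g.\ that $\kappa(s)d(s)\to 0$), a point left unverified there. What the paper's route buys in exchange is the effective one-dimensional operator \eqref{shrinkest}, which is reused immediately afterwards for the eigenvalue-counting asymptotics in the Fermat example; your bracketing bound yields the qualitative statement but not that quantitative by-product. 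Two cosmetic points: your monotonicity step is convenient but not essential, since $\sup_{\theta\ge 2\pi n_0}a(\theta)\to 0$ already follows from $a(\theta)\to 0$; and your closing remark on the bounded case matches the paper's own Remark on the jelly roll, correctly exploiting that the Dirichlet (as opposed to Neumann) form domain embeds compactly for any bounded open set.
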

\begin{proof}
We divide $\mathcal{C}$ into a precompact $\mathcal{C}_\mathrm{c}$ and $\mathcal{C}_\mathrm{nc} = \big(\mathcal{C}\setminus\mathcal{C}_\mathrm{c}\big)^\mathrm{o}$ in which the parallel coordinates can be introduced, separated by Neumann boundary at some $s_0>0$. It is sufficient to show that restrictions of $H_r$ to the two regions have purely discrete spectra. For the former it follows from the precompacteness of $\mathcal{C}_\mathrm{c}$, the latter is unitarily equivalent to the operator $\hat{H}^\mathrm{D}_\mathrm{nc}$  of the form \eqref{straightening}. Estimating the transverse contribution to the energy from below we get
\begin{equation} \label{shrinkest}
 \hat{H}^\mathrm{D}_\mathrm{nc} \ge -\frac{\partial}{\partial s} (1-u\kappa(s))^{-2} \frac{\partial}{\partial s} +\frac{\pi^2}{d(s)^2} + V(s,u)
\end{equation}
Since $d(s)\to 0$ as $s\to\infty$ holds is a strictly shrinking region, the sum of the two last term explodes in the limit and the absence of the essential spectrum can be checked in the standard way, see e.g. \cite[Thm.~XIII.16]{RS78}.
\end{proof}

\begin{remark} \label{r:jelly roll}
{\rm An example of a strictly shrinking spiral domain is provided by \emph{Fermat spiral}, $r(\theta)^2=b^2\theta$. Furthermore, the above result holds \emph{a fortiori} for regions contained in a bounded part of the plane, $\lim_{\theta\to\infty} r(\theta)<\infty$, where the spectrum would be purely discrete even without the Dirichlet spiral barrier. This illustrates one more time the huge difference between Dirichlet and Neumann Laplacians. Recall an example in which that the Dirichlet barrier is replaced by Neumann one and the spectrum covers the halfline $\mathbb{R}_+$ being absolutely continuous, up to a possible discrete family of embedded eigenvalues of finite multiplicity \cite{Si92}.
}
\end{remark}

\subsection{Numerical results}

As an illustration, we plot in Fig.~\ref{fermat_ef} several eigenfunctions of $H_r$ corresponding to the Fermat spiral region mentioned in the previous remark. By Proposition~\ref{prop: genspecdisc} the spectrum is purely discrete and we see that apart from the central region the eigenfunctions have a quasi-one-dimensional character.
\begin{figure}[ht]
\centering
\includegraphics[scale=.75]{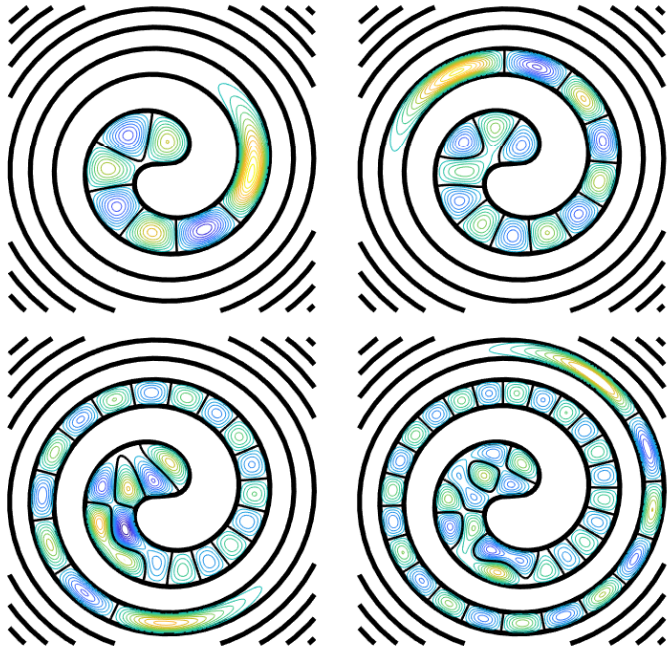}
\caption{Eigenfunctions of the Fermat spiral region with $b=1$, shown through their horizontal levels (colour online), corresponding to the eigenvalues, $E_7=19.5462$, $E_{15}=28.3118$, $E_{27}=38.8062$, and $E_{42}=48.8367$.}
\label{fermat_ef}
\end{figure}
A natural question concerns the distribution of eigenvalues with respect to the increasing energy. The dominant contribution comes from the second term on the right-hand side of \eqref{shrinkest}, the transverse Dirichlet eigenvalue $v(\theta) =\big(\frac{\pi}{d(\theta)}\big)^2$ as a function of the arc-length variable $s$, where this time we have $d(\theta) = r(\theta)-r(\theta-\pi) \approx \frac12\pi b\theta^{-1/2} \big(1+\mathcal{O}(\theta^{-1}(\big)$ as $\theta\to\infty$. Using \eqref{genarclength} we find $s(\theta)=\frac23\,b\,\theta^{3/2} \big(1+\mathcal{O}(\theta^{-1}(\big)$, hence the said leading term of the effective potential is $\approx c\,s^{2/3}$ with $c=\big(\frac{2}{b}\big)^2 \big(\frac{3}{2b}\big)^{2/3}$ giving the estimate of the number of eigenvalues with energy $\le E$ is $N(E) \approx \frac{1}{64}\,b^4\,E^2$ as $E\to\infty$.
\begin{figure}[ht]
\centering
\includegraphics[scale=.75]{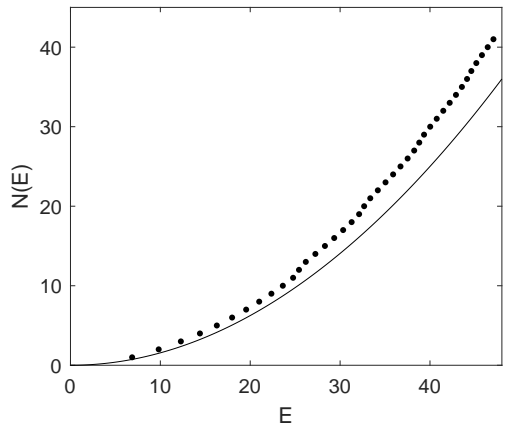}
\caption{The number of eigenvalues in Fermat spiral region \emph{vs.} energy in comparison with the asymptotics taking into account the strip width only.}
\label{fermat_ev}
\end{figure}
This is plotted in Fig.~\ref{fermat_ev} together with the true number of eigenvalues. We see a significant excess which is naturally attributed to the geometry-related effects expressed by first and third terms on the right-hand side of \eqref{shrinkest}. In addition, the plot suggests that there are finer effects in the asymptotics; we leave this to a future work.

\subsection{Asymptically Archimedean regions}

Between the above discussed extremes the situation is more interesting:
\begin{proposition} \label{prop: asymptess}
If the spiral $\Gamma$ is asymptotically Archimedean with $\lim_{\theta\to\infty} a(\theta) = a_0$, we have $\sigma_\mathrm{ess}(H_r) = [(2a_0)^{-2},\infty)$. In the case of a multi-arm region with $\lim_{\theta\to\infty} a_j(\theta) = a_{0,j}$, the essential spectrum is $[(2\overline{a})^{-2},\infty)$, where $\overline{a}:= \max_{0\le j\le m-1} a_{0,j}$.
\end{proposition}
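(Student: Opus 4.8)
The plan is to mimic the two-sided strategy already used in the Archimedean case (Propositions~\ref{prop: infspess} and \ref{prop: spess}), namely to establish the lower bound $\inf\sigma_\mathrm{ess}(H_r)\ge(2a_0)^{-2}$ by a Neumann-bracketing argument and the reverse inclusion $\sigma_\mathrm{ess}(H_r)\supset[(2a_0)^{-2},\infty)$ by constructing a singular Weyl sequence. Throughout I would work with the unitarily equivalent operator $\tilde H_r$ on $L^2(\Omega_r)$ and exploit that the essential spectrum is determined only by the behaviour of $\Gamma$ for large $\theta$, so that the precompact central part contributes nothing. The key new ingredient relative to the Archimedean case is that the coil width is no longer constant but satisfies $a(\theta)\to a_0$; the role of the exact relation $d(\theta)\sim 2\pi a$ is now played by the asymptotic statement that the transverse width tends to $2\pi a_0$, which I would extract from the definition \eqref{coilw} together with the geometric estimate \eqref{transest} suitably adapted to a variable width.

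For the lower bound I would cut $\Omega_r$ at a large angle $\theta_\mathrm{N}$, impose an auxiliary Neumann condition there, and discard the bounded inner piece. On the outer piece the transverse Poincar\'e inequality in the $\theta$-variable, exactly as in \eqref{poinctheta}, gives a constant governed by the local coil width; since $a(\theta)\ge a_0-\varepsilon$ for all $\theta\ge\theta_\mathrm{N}(\varepsilon)$ in the simple case, and in general for $\theta$ large, the transverse energy is bounded below by $(2(a_0+\varepsilon))^{-2}$ up to a vanishing radial remainder. Letting $\theta_\mathrm{N}\to\infty$ and then $\varepsilon\to0$ yields $\inf\sigma_\mathrm{ess}(H_r)\ge(2a_0)^{-2}$. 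For the reverse inclusion I would adapt the test functions \eqref{weyltest}, $\psi_{k,\lambda}(s,u)=\mu(\lambda s)\,\mathrm{e}^{iks}\sin\frac{\pi u}{d(s)}$, now using the asymptotically Archimedean form of $\kappa(s)$ and $d(s)\to2\pi a_0$; the point is that all the geometric coefficients in \eqref{straightH}--\eqref{effpot} converge to their Archimedean limits and their derivatives decay, so that the same cancellation of leading terms occurs and $\big\|(\hat H^\mathrm{D}_{r,\mathrm{nc}}-(2a_0)^{-2}-k^2)\psi_{k,\lambda}\big\|/\|\psi_{k,\lambda}\|\to0$ as $\lambda\to0$.

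The main obstacle I expect is controlling the parallel-coordinate decomposition in the general asymptotically Archimedean setting: unlike the Archimedean case, one has no closed formula for $d(\theta)$ or for the inverse of the arclength \eqref{genarclength}, and one must verify that the width indeed tends to $2\pi a_0$ and that the curvature and its derivatives decay fast enough that the effective potential \eqref{effpot} is asymptotically negligible on the support of the scaled Weyl functions. I would handle this by returning to the implicit equations \eqref{transw}--\eqref{transang} in their general form, showing that the finite limit of $a(\theta)$ forces $d(\theta)=2\pi a_0\big(1+o(1)\big)$ and $\kappa(s)\to0$ with integrable derivatives, which suffices for the error estimates to close. For the multi-arm case the operator splits, after the rotational reduction, into $m$ channels of the single-arm type with limiting widths $a_{0,j}$; the essential spectrum is the union of the resulting halflines $[(2a_{0,j})^{-2},\infty)$, whose infimum is governed by the largest width $\overline a=\max_j a_{0,j}$, giving $[(2\overline a)^{-2},\infty)$.
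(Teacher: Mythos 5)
Your proposal is correct in substance and coincides with the paper's argument for the inclusion $[(2a_0)^{-2},\infty)\subset\sigma_\mathrm{ess}(H_r)$ (the Weyl sequence \eqref{weyltest} transplanted to the asymptotically Archimedean strip) and for the multi-arm reduction, which indeed rests only on the Dirichlet decoupling of the $m$ channels rather than on rotational symmetry. Where you genuinely diverge is the lower bound: the paper stays in the parallel coordinates and uses the one-line estimate $\inf\sigma_\mathrm{ess}(H_r)\ge\inf\{\pi^2/d(s)^2+V(s,u):\,s>s_0\}$, letting $s_0\to\infty$, whereas you redo the polar-coordinate Neumann bracketing of Proposition~\ref{prop: infspess} with variable radial width. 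Your route is arguably more robust for this half of the claim: it needs only the radial coil width $2\pi a(\theta)\to 2\pi a_0$, which is immediate from \eqref{coilw}, together with the fixed angular width $2\pi$ that neutralizes the $-\frac{1}{4r^2}$ term, while the paper's bound requires $V(s,u)\to 0$, i.e.\ decay of the curvature and its derivatives, and the convergence $d(s)\to 2\pi a_0$ of the perpendicular width --- precisely the geometric control you flag as the main obstacle (and which both you and the paper must still supply for the Weyl-sequence half). Two small repairs: for the lower bound you need the upper estimate $a(\theta)\le a_0+\varepsilon$ for large $\theta$ (you wrote $a(\theta)\ge a_0-\varepsilon$, which points the wrong way), and it is the \emph{radial} Poincar\'e constant that is governed by the coil width, while the $\theta$-variable inequality \eqref{poinctheta} always involves an interval of fixed length $2\pi$ and serves only to absorb the Jacobian term $-\frac{1}{4r^2}$ up to an $\mathcal{O}(r^{-2})$ remainder that vanishes on the outer piece.
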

\begin{proof}
Mimicking the proof of Proposition~\ref{prop: spess} we can check that any $\lambda \ge (2a_0)^{-2}$ belongs to the essential spectrum. To complete the proof, we note that
$$
 \inf \sigma_\mathrm{ess}(H_r) \ge \inf \sigma_\mathrm{ess}(\hat{H}^\mathrm{D}_\mathrm{nc}) \ge \inf\big\{\textstyle{\frac{\pi^2}{d(s)^2}} + V(s,u):\: s>s_0,\, 0<u<d(s)\big\}.
$$
The point $s_0$ where the Neumann segment separating $\mathcal{C}_\mathrm{c}$ and $\mathcal{C}_\mathrm{nc}$ is placed can be chosen arbitrarily large, and since the expression at the right-hand side tends to $(2a_0)^{-2}$ as $s_0\to \infty$ we get the sought result. The argument in the multi-arm case is analogous.
\end{proof}

The question about the discrete spectrum of asymptotically Archi\-medean spiral regions is more subtle. There is no doubt that, similarly to Propositions~\ref{prop: disccavity} and \ref{prop: m-disccavity}, localized states arise when $\mathcal{C}$ would be locally wide enough. However, the discrete spectrum, and a rich one, may also come from the asymptotic behaviour of the spiral width provided that $r(\cdot)$ approaches the linear asymptote \emph{from below}. To be specific, consider the spiral
\begin{equation} \label{asspiral}
 r(\theta) =a_0\theta + b_0 - \rho(\theta)
\end{equation}
with the polar width function
$$
 a(\theta) =a_0 + a_1(\theta),\quad a_1(\theta):= \frac{1}{2\pi}\big(\rho(\theta-2\pi) - \rho(\theta)\big),
$$
where $\rho(\cdot)$ is a positive function such that $\lim_{\theta\to\infty} \rho(\theta) = 0$; for the sake of definiteness we restrict our attention to functions satisfying
\begin{equation} \label{derdecay}
 \dot\rho(\theta) = -\frac{c}{\theta^\gamma} + \mathcal{O}(\theta^{-\gamma-1}) \quad\text{as}\quad \theta\to\infty
\end{equation}
with $1<\gamma<3$, so the width function variation is $a_1(\theta) = \frac{2\pi c}{\theta^\gamma} + \mathcal{O}(\theta^{-\gamma-1})$. The restriction $\gamma>1$ means that the spiral is not too `inflated', there is a finite bound to the distance between the respective coils of the spirals with different $c$'s, including the one with $c=0$. Under assumption \eqref{derdecay} we have $a_1(\theta) = \frac{2\pi c}{\theta^\gamma} + \mathcal{O}(\theta^{-\gamma-1})$, hence the length-type quantities such as the orthogonal distance $d(\theta)$ scale by the factor
$$
 1 + \frac{a_1(\theta)}{a_0} = 1 + \frac{2\pi c}{a_0\theta^\gamma} + \mathcal{O}(\theta^{-\gamma-1}),
$$
and energy-type quantities such as $\big(\frac{\pi}{d(s)}\big)^2$ by
\begin{equation} \label{transscale}
 \Big(1 + \frac{a_1(\theta)}{a_0}\Big)^{-2} = 1 - \frac{4\pi c}{a_0\theta^\gamma} + \mathcal{O}(\theta^{-\gamma-1}),
\end{equation}
hence the sum \eqref{combination} of the energy of the lowest transverse mode and the effective potential \eqref{effpot} behaves as
\begin{equation} \label{inflatsum}
 W(s,u) = \frac{1}{4a_0^2} - \frac{\pi c}{a_0^3 \theta^\gamma} + \mathcal{O}(\theta^{-\gamma'}),
\end{equation}
where $\gamma':= \min(\gamma+1,3)$. This allows us to make the following claim:
\begin{proposition} \label{prop: infspec}
Under the assumption \eqref{derdecay}, $\#\sigma_\mathrm{disc}(H_r)=\infty$ holds for any $c>0$.
\end{proposition}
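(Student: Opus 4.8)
The plan is to argue variationally through the min-max principle. By Proposition~\ref{prop: asymptess} we have $\inf\sigma_\mathrm{ess}(H_r)=(2a_0)^{-2}$, so it suffices to produce, for every $N\in\mathbb{N}$, an $N$-dimensional subspace of the form domain on which the shifted form $\psi\mapsto q_r[\psi]-(2a_0)^{-2}\|\psi\|^2$ is strictly negative; this forces at least $N$ eigenvalues below the threshold, and letting $N\to\infty$ gives the claim, the eigenvalues necessarily accumulating at $(2a_0)^{-2}$. All trial functions will be supported in the asymptotic region $\mathcal{C}_\mathrm{nc}$, where the parallel coordinates $(s,u)$ are available and $q_r$ may be evaluated through the straightened form \eqref{straightening}; since they are smooth and compactly supported away from the origin, they are admissible test functions with no reference to the singular element.

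First I would reduce to an effective one-dimensional problem by the standard waveguide Ansatz, taking $\psi(s,u)=\phi(s)\,\chi_s(u)$ with $\chi_s(u)=\sqrt{2/d(s)}\,\sin\frac{\pi u}{d(s)}$ the normalized lowest transverse mode. Inserting this into \eqref{straightH}--\eqref{effpot} and integrating over $u$, the longitudinal kinetic term yields $\int|\dot\phi|^2\,\mathrm{d}s$ up to the factor $(1-u\kappa)^{-2}=1+\mathcal{O}(\kappa)$ and the terms generated by $\partial_s\chi_s$, whose leading cross-contribution vanishes because $\int_0^{d(s)}\chi_s\,\partial_s\chi_s\,\mathrm{d}u=\tfrac12\partial_s\|\chi_s\|^2=0$; the remaining terms assemble into $\int\langle W\rangle_s|\phi|^2\,\mathrm{d}s$ with $\langle W\rangle_s$ the transverse mean of \eqref{inflatsum}. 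Since the leading correction in \eqref{inflatsum} is $u$-independent, this mean is $(2a_0)^{-2}-\frac{\pi c}{a_0^3}\theta^{-\gamma}+\mathcal{O}(\theta^{-\gamma'})$. I would then translate the angular decay into decay in $s$: from \eqref{genarclength} with $r(\theta)=a_0\theta+\mathcal{O}(1)$ one gets $s(\theta)=\tfrac12a_0\theta^2\big(1+\mathcal{O}(\theta^{-1})\big)$, whence $\theta^{-\gamma}=\big(\tfrac{a_0}{2}\big)^{\gamma/2}s^{-\gamma/2}\big(1+o(1)\big)$ and the shifted longitudinal potential reads $\langle W\rangle_s-(2a_0)^{-2}=-C\,s^{-\gamma/2}+\mathcal{O}(s^{-\gamma'/2})$ with $C=\frac{\pi c}{a_0^3}\big(\tfrac{a_0}{2}\big)^{\gamma/2}>0$.

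The core is then an elementary scaling estimate for the attractive tail $-C\,s^{-\gamma/2}$, which decays strictly slower than $s^{-2}$ because $\tfrac12<\tfrac\gamma2<\tfrac32<2$. Fixing $g\in C_0^\infty((1,2))$ and setting $\phi_R(s)=g(s/R)$, the kinetic energy scales as $\int|\dot\phi_R|^2\,\mathrm{d}s=R^{-1}\|\dot g\|^2$, while the potential energy scales as $-C\int s^{-\gamma/2}|\phi_R|^2\,\mathrm{d}s=-C\,R^{1-\gamma/2}\int_1^2\sigma^{-\gamma/2}|g(\sigma)|^2\,\mathrm{d}\sigma$. As $R^{1-\gamma/2}/R^{-1}=R^{2-\gamma/2}\to\infty$ with $2-\tfrac\gamma2>\tfrac12$, the attractive term dominates and the shifted form is strictly negative for all $R$ large. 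Choosing $R_k$ with $R_{k+1}>2R_k$ makes the supports $(R_k,2R_k)$ pairwise disjoint, so the $\psi_{R_k}=\phi_{R_k}\chi_s$ are mutually orthogonal and carry no cross terms; the shifted form is therefore negative on the span of any finite subfamily, which by min-max yields arbitrarily many eigenvalues below $(2a_0)^{-2}$.

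The main obstacle I anticipate is purely bookkeeping: confirming that every discarded correction is of strictly lower order than the surviving term $-C\,R^{1-\gamma/2}$ after integration over $\mathrm{supp}\,\phi_R$. One must check that the contribution of $(1-u\kappa)^{-2}-1=\mathcal{O}(s^{-1/2})$, that of $\partial_s\chi_s$ governed by $\dot d=\mathcal{O}(s^{-\min(\gamma/2+1,2)})$, the remainder $\mathcal{O}(\theta^{-\gamma'})$ of \eqref{inflatsum} with $\gamma'=\min(\gamma+1,3)$, and the $\mathcal{O}(s^{-1/2})$ error in the $\theta\mapsto s$ conversion, all integrate to $o(R^{1-\gamma/2})$; each carries an exponent strictly below $1-\tfrac\gamma2$, so the estimate closes. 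It is precisely here that the hypotheses $c>0$ and $1<\gamma<3$ enter, the former fixing the attractive sign and the latter keeping the decay slow enough that the longitudinal well supports infinitely many states.
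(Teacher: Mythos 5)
Your proposal is correct and follows essentially the same route as the paper: the authors also use separated trial functions $\mu(\lambda s)\sin\frac{\pi u}{d(s)}$ (your $\phi_R\chi_s$ with $\lambda=1/R$), balance the kinetic cost $\mathcal{O}(\lambda)$ against the attractive tail $-C\lambda^{(\gamma-2)/2}=-CR^{1-\gamma/2}$ coming from \eqref{inflatsum-s}, and conclude via disjointly supported copies and min--max. Your explicit bookkeeping of the cross term $\int\chi_s\partial_s\chi_s\,\mathrm{d}u=0$ and of the error exponents is a slightly more careful rendering of the same argument.
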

\begin{proof}
We use a variational argument with trial functions $\psi_\lambda=\psi_{0,\lambda}$ given by \eqref{weyltest}; the aim is to show that
\begin{align} \label{shifted}
p[\psi_\lambda] :=&\: \big(\psi_\lambda, \tilde{H}^\mathrm{D}_\mathrm{nc}\psi_\lambda\big) - \frac{1}{4a_0^2}\,\|\psi_\lambda\|^2 \\
=&\: \big(\psi_\lambda, -\partial_s(1-u\kappa(s))^{-2}\partial_s \psi_\lambda + \big( \textstyle{\frac{\pi^2}{d(s)^2}} + V(s,u) - \textstyle{\frac{1}{4a_0^2}} \big) \psi_\lambda\big) \nonumber
\end{align}
can be made negative; as in the proof of Proposition~\ref{prop: spess} $\lambda$ is to be chosen small enough to ensure that the support of $\psi_\lambda$ lies in the region where the parallel-coordinate parametrization applies. The right-hand side is the sum of two terms; the first one can be through integration by parts rewritten as
$$
 \lambda^2 \int_{\lambda^{-1}}^{2\lambda^{-1}} \int_0^{d(s)} \big(1-u\kappa(s))^{-2}\big)\, |\dot\mu(\lambda s)|^2\, \sin^2\!\textstyle{\frac{\pi u}{d(s)}}\, \mathrm{d}u\mathrm{d}s
$$
and since $\kappa(s)\to 0$ and $d(s)\to a_0$ holds as $s\to\infty$, the expression can be for small enough $\lambda$ estimated, say, by $\lambda\,\frac{4\pi}{a_0}\, \|\dot\mu\|^2$. The second term on the right-hand side of \eqref{shifted} can be with the help \eqref{inflatsum} written as
$$
 \int_{\lambda^{-1}}^{2\lambda^{-1}} \int_0^{d(s)} \big(W(s,u) - \textstyle{\frac{1}{4a_0^2}} \big) |\psi_\lambda(s,u)|^2\, \mathrm{d}u\mathrm{d}s.
$$
To proceed we have to relate the arc length \eqref{genarclength} to the angular variable It is easy to see that for the radius given by \eqref{asspiral} the change against \eqref{archilength} shows in the error term only, we have $s(\theta)=\frac12 a_0\theta^2+\mathcal{O}(\theta)$, and conversely, $\theta= \sqrt{\frac{2s}{a_0}} +\mathcal{O}(1)$ holds for large $s$, so \eqref{inflatsum} becomes
\begin{equation} \label{inflatsum-s}
 W(s,u) = -\frac{\pi c}{a_0^3} \Big(\frac{a_0}{2s}\Big)^{\gamma/2} + \mathcal{O}(s^{-\gamma'/2}).
\end{equation}
The leading term is negative, independent of $u$ and increasing with respect to $s$, hence its maximum on the support of $\psi_\lambda$ is reached at $s=2\lambda^{-1}$ and we can estimate the expression in question as
$$
 -\Big(\frac{\pi c}{a_0^3} \Big(\frac{a_0\lambda}{4}\Big)^{\gamma/2} + \mathcal{O}(\lambda^{\gamma'/2})\Big) \int_{\lambda^{-1}}^{2\lambda^{-1}} \frac{2\pi}{d(s)}\, |\mu(\lambda s)|^2 \mathrm{d}s,
$$
hence we finally get
$$
 p[\psi_\lambda] < \lambda\,\frac{4\pi}{a_0}\, \|\dot\mu\|^2 - \Big(\frac{4\pi^2 c}{a_0^4} \Big(\frac{a_0}{4}\Big)^{\gamma/2}\,\lambda^{(\gamma-2)/2}+ \mathcal{O}\big(\lambda^{(\gamma'-2)/2}\big)\Big) \|\mu\|^2,
$$
where the right-hand side is negative for all $\lambda$ small enough. Finally, since the support of $\mu$ is compact, one can choose a sequence $\{\lambda_n\}$ such that $\lambda_n\to 0$ as $n\to\infty$ and the supports of $\psi_{\lambda_n}$ are mutually disjoint which means that the discrete spectrum of $H_r$ is infinite, accumulating at the threshold $(2a_0)^{-2}$.
\end{proof}

\section{Fermat meets Archimedes}
\setcounter{equation}{0}

We have mentioned that Fermat spiral gives rise to the region with a purely discrete spectrum. Something different happens if we consider an interpolation between it and an Archimedean spiral, in the simplest case described parametrically as
\begin{equation} \label{fermarch}
 r(\theta) = a\sqrt{\theta\big(\theta+\textstyle{\frac{b^2}{a^2}}\big)}, \quad a,b>0,
\end{equation}
with the asymptotic behaviour
\begin{align*}
r(\theta) &= b\sqrt{\theta} + \frac{a^2}{2b}\theta^{3/2} + \mathcal{O}(\theta^{5/2}), \\
r(\theta) &= a\theta + \frac{b^2}{2a} + \mathcal{O}(\theta^{-1}),
\end{align*}
for $\theta\to 0+$ and $\theta\to\infty$, respectively. The Fermat spiral is conventionally considered as a two-arm one dividing the plane into a pair of mutually homothetic regions, the interpolating one would thus coincide asymptotically with the two-arm Archimedean spiral. In view of Proposition~\ref{prop: asymptess} the essential spectrum of the corresponding $H_r$ is $[a^{-2},\infty)$. What concerns its discrete spectrum, taking the asymptotic expansion of \eqref{fermarch} two terms further, we can write $r(\theta)$ in the form \eqref{asspiral} with $b_0=\frac{b^2}{2a}$ and
\begin{equation} \label{fermarho}
 \rho(\theta) = \frac{b^4}{8a^3\theta} - \frac{3b^6}{16a^5\theta^2} + \mathcal{O}(\theta^{-3})
\end{equation}
which means that the present interpolating region satisfies assumption \eqref{derdecay} with $c=\frac{b^4}{8a^3}$ and $\gamma=2$. Proposition~\ref{prop: infspec} then says that the operator $H_r$ has an \emph{infinite discrete spectrum} in $(0,a^{-2})$ accumulating at the threshold. Moreover, one can also specify the corresponding accumulation rate: the one-dimensional effective potential \eqref{inflatsum-s} is in this case $\frac{\pi b^4}{16a^5}\, s^{-1} + \mathcal{O}(s^{-3/2})$, with the leading term of Coulomb type, which shows that the number of eigenvalues below $a^{-2}-E$ behaves as
\begin{equation} \label{accurate}
 \mathcal{N}_{a^{-2}-E}(H_r) = \frac{\pi b^4}{32a^5}\,\frac{1}{\sqrt{E}} + o(E^{-1/2}) \quad\text{if}\quad E\to 0+
\end{equation}

\begin{remarks}
{\rm (a) It is worth noting that while the experimentalists sometimes label their spirals explicitly as Archimedean \cite{CFW15, CLV14, JLX15}, in fact they are not. The reason is that they are produced by coiling fibers of a fixed cross section, hence their transverse width is constant instead of changing with the angle as it would for the true Archimedean spiral. Comparing relations \eqref{transen}, \eqref{transscale}, and \eqref{fermarho} we see that such waveguides behave asymptotically rather as the current interpolation with $\frac{b}{a}=(2\pi)^{-1/4}\approx 0.632$. \\[.2em]
(b) Let us finally note that different asymptotically Archimedean regions may exhibit very different spectral behaviour. Consider, for instance, \emph{involute of a circle}, $r(\theta)= a\sqrt{1+\theta^2}$, where the asymptotic behaviour of $r(\theta)$ has the form \eqref{asspiral} with $b_0=0$ and $\rho(\theta)<0$ which means that Proposition~\ref{prop: infspec} does not apply; the polar width of the region is smaller than that of its Archimedean asymptote acting thus against the existence of bound states. A similar conclusion one can make for \emph{Atzema spiral} \cite{wikilist}, for which $r(t)=a(t+t^{-1})$, where $t>t_0$ for a suitable $t_0$ to avoid self-intersections. The parameter in this case is \emph{not} the polar angle, but $\theta = t - \frac{\pi}{2} + \mathcal{O}(t^{-1})$, hence we have $b_0=0$ and $\rho(\theta)<0$ again. On the other hand, modifying the last example slightly by choosing  $r(\theta)= a(\theta-\theta^{-1})$ with $\theta>1$, such a spiral obviously satisfies \eqref{asspiral} with $b_0=0$ and $\gamma=2$ so that the corresponding region gives rise to an infinite discrete spectrum below $(2a)^{-1}$ accumulating in a way similar to \eqref{accurate}.}
\end{remarks}

\subsection{Numerical results}

Let us finally show some numerical results about the interpolation spiral region discussed here. By Proposition~\ref{prop: infspec} the discrete spectrum is infinite. In Fig.~\ref{interpol_ev} we plot the lowest nine eigenvalues of the corresponding operator $H_r$ with respect to the parameter $b$ that controls the interpolation; we choose $a=1$ to fix the essential spectrum.
\begin{figure}[ht]
\centering
\includegraphics[scale=1]{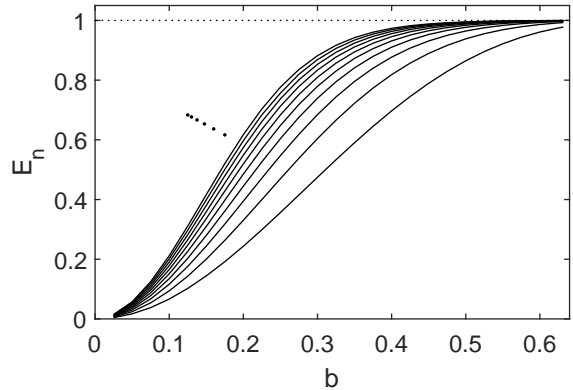}
\caption{The lowest eigenvalues of interpolating region as functions of $b$.}
\label{interpol_ev}
\end{figure}
As expected the ground state is close to the continuum threshold for (sufficiently) large values of $b$ and the whole discrete spectrum disappears in the limit $b\to\infty$, while for small $b$ the region has a large bulge in the center and the spectral bottom drops to appropriately low values. We also see in the picture how the eigenvalues accumulate towards the continuum in correspondence with \eqref{accurate}.

Furthermore, in Fig.~\ref{approxwf} we show an eigenfunction with an appropriately high index and the eigenvalue very close to the continuum. The figure corresponds to the interpolation parameter $b=(2\pi)^{-1/4}$ mentioned in Remark~6.1a above.
\begin{figure}[ht]
\centering
\includegraphics[scale=.75]{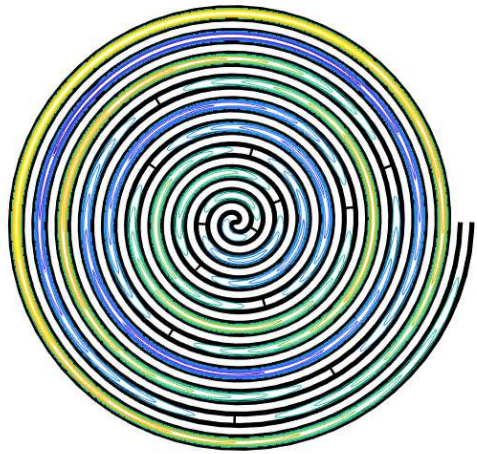}
\caption{The eigenfunction corresponding to the energy $E_{14}=0.999952$ for the interpolation with $b=(2\pi)^{-1/4}$ (colour online).}
\label{approxwf}
\end{figure}
The difference from the two-arm Archimedean region is hardly perceptible by a naked eye, however, the discrete spectrum is now not only non-void but it is rich with the eigenfunctions the tails of which have a distinctively quasi-one-dimensional character.

\section{Concluding remarks}
\setcounter{equation}{0}

The topic we have addressed in this paper is broad and offers a number of other questions. Let us briefly mention some of them:
\begin{itemize}
\setlength{\itemsep}{0pt}
\item There are other spiral-shaped regions of interest. Apart from the example mentioned in Remark~\ref{r:jelly roll} with the mostly mathematical appeal, we discussed situations in which the closure $\overline{\mathcal{C}}$ was simply connected. This is often not the case in physical applications, recall, for instance, \cite{BBP09, CLV14, JLX15, TLY18}, where the guide looks like a two-arm Archimedean region -- in the experimentalist reckoning mentioned above -- however, with an opening in the center where the two `loose ends' meet each other in an S-shape way.
\item With respect to Propositions~\ref{prop: disccavity} and \ref{prop: m-disccavity}, a discrete spectrum can also be created by erasing a part of the Dirichlet boundary away from the center. There is no doubt that a large enough `window' would give rise to bound states, the question is how small it can be to have that effect.
 \item Having in mind that (single-arm) spiral-shaped regions violate the rotational symmetry, one is naturally interested whether the spectrum is simple.
 \item Another question concerns the spectral statistics for strictly shrinking spirals, one would like to know whether they give rise to quantum chaotic systems.
\item We mentioned in Remark~\ref{r:jelly roll} that properties of the Neumann Laplacian may be completely different. On the other hand, one expects that a Robin boundary (with a nonzero parameter) would behave similarly to the Dirichlet one with some natural differences: in the asymptotically Archimedean case the essential spectrum threshold would be not $(2a)^{-1}$ but the appropriate principal eigenvalue of the Robin Laplacian on an interval of length $2\pi$, and in the case of an attractive boundary the operator will no longer be positive.
\item In some physical applications a magnetic field is applied, hence it would be useful to investigate spectral properties of the magnetic Dirichlet Laplacian in spiral-shaped regions of the types discussed here.
\item And last not least, in real physical system the separation of the spiral coils is never complete, which motivates one to look into the `leaky' version of the present problem, that is, singular Schr\"odinger operators of the type $-\Delta+\alpha\delta_\Gamma$.
\end{itemize}

\subsection*{Acknowledgment}

The work of P.E. was in part supported by the European Union within the project CZ.02.1.01/0.0/0.0/16 019/0000778.

\subsection*{ORCID iDs}

Pavel Exner  https://orcid.org/0000-0003-3704-7841

\subsection*{References}

\end{document}